\def\boxit#1{\vbox{\hrule\hbox{\vrule\kern4pt
  \vbox{\kern1pt#1\kern1pt}
\kern2pt\vrule}\hrule}}
\newtheorem{observation}{Observation}
\newtheorem{claim}{Claim}
\newtheorem{lemma}{Lemma}
\newtheorem{theorem}{Theorem}
\newcommand{\Oh}{\mathcal{O}}
\newcommand{\Ohstar}{\Oh^\ast}
\newcommand{\Sdeg}{\mathrm{Ndeg}}
\newcommand{\Gdeg}{\mathrm{Gdeg}_W}
\newcommand{\Tdeg}{\mathrm{Tdeg}}
\begin{document}

\title{An improved FPT algorithm for Independent Feedback Vertex Set%
\thanks{An extended abstract of this paper has been presented at 44th International Workshop on Graph-Theoretic Concepts in Computer Science~\cite{wg2018}.
  This research is a part of projects that have received funding from the European Research Council (ERC) under the European Union's Horizon 2020 research and innovation programme
under grant agreement No 714704.}}

\author{Shaohua Li\thanks{Institute of Informatics, University of Warsaw, Poland, \texttt{Shaohua.Li@mimuw.edu.pl}.}
  \and
  Marcin Pilipczuk\thanks{Institute of Informatics, University of Warsaw, Poland, \texttt{malcin@mimuw.edu.pl}.}}

  \date{}

\maketitle

\begin{textblock}{20}(0, 12.5)
\includegraphics[width=40px]{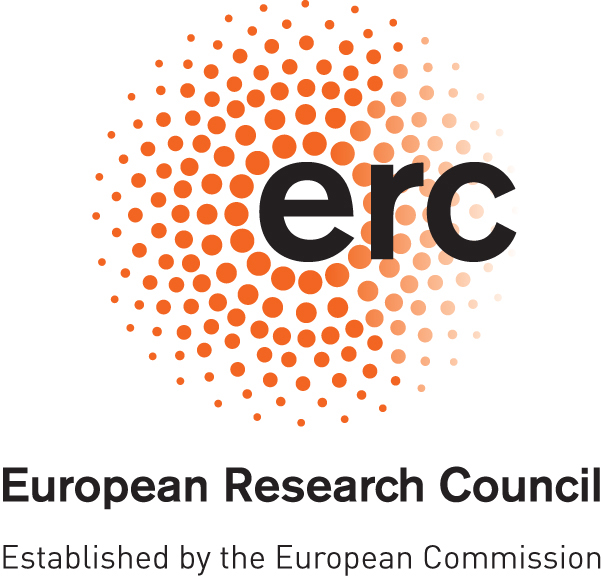}%
\end{textblock}
\begin{textblock}{20}(-0.25, 12.9)
\includegraphics[width=60px]{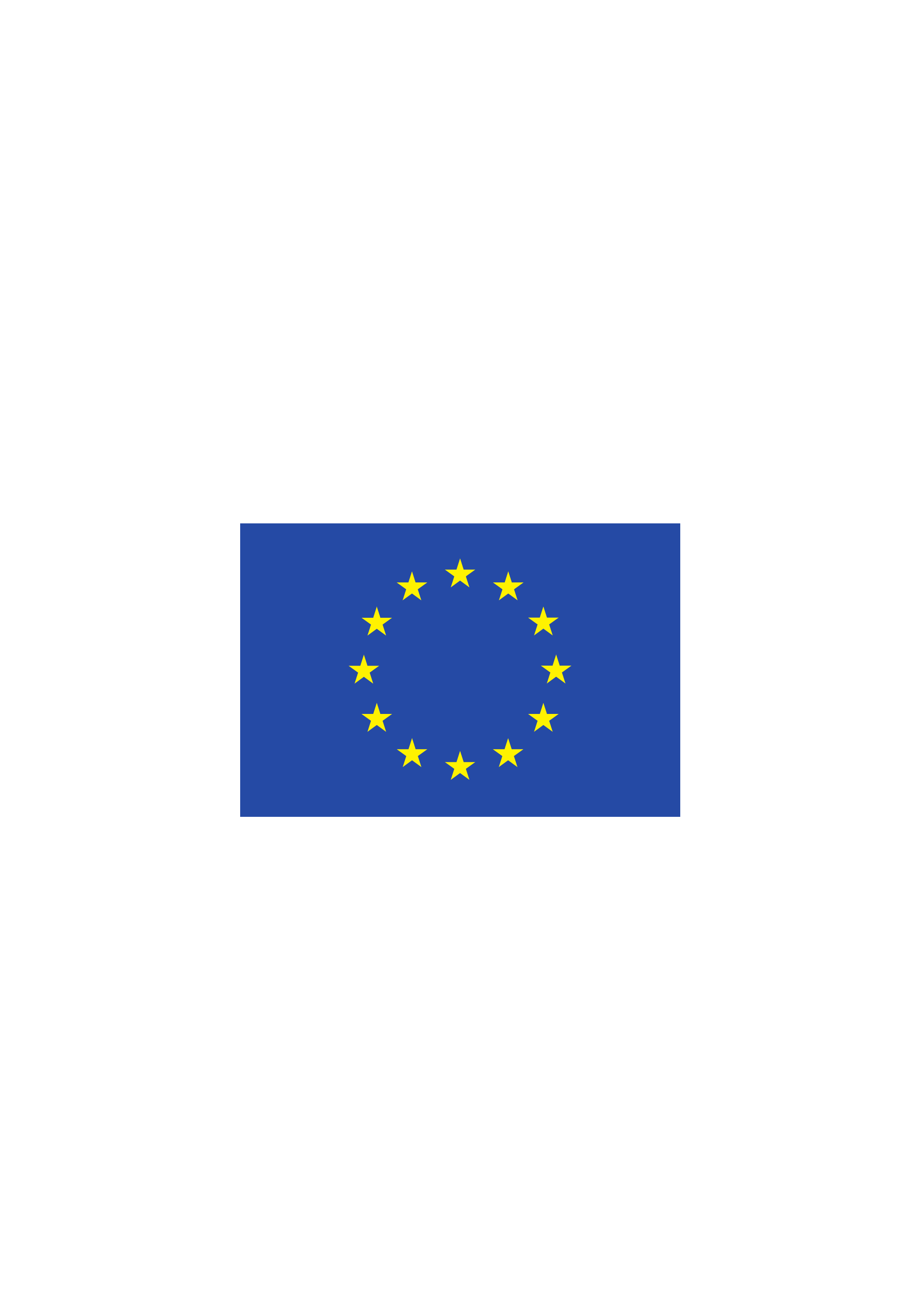}%
\end{textblock}

\begin{abstract}
We study the \textsc{Independent Feedback Vertex Set} problem ---
a variant of the classic \textsc{Feedback Vertex Set}
problem where, given a graph $G$ and an integer $k$,
the problem is to decide whether there exists a vertex set $S\subseteq V(G)$ such that $G\setminus S$ is a forest and $S$ is an independent set of size at most $k$.
We present an $\Ohstar((1+\varphi^{2})^{k})$-time FPT algorithm for this problem,
where $\varphi<1.619$ is the golden ratio, improving the previous fastest $\Ohstar(4.1481^{k})$-time algorithm given by Agrawal et al.~\cite{ipec16}.
The exponential factor in our time complexity bound matches the fastest deterministic FPT algorithm for the classic \textsc{Feedback Vertex Set} problem.

On the technical side, the main novelty is a refined measure of an input instance in a branching
process, that allows for a simpler and more concise description and analysis of the algorithm.

\end{abstract}

\section{Introduction}\label{sec:Introduction}
Given a graph $G$, a feedback vertex set of $G$ is a set of vertices $S\subseteq V(G)$ such that $G\setminus S$ is a forest.
The \textsc{Feedback Vertex Set} problem (\textsc{FVS}) asks to find a feedback vertex set of the minimum size. This problem is a classic NP-hard problem which has been studied extensively in many fields of complexity and algorithms~\cite{opt/2009}.

In this work, we take the point of view of \emph{parameterized complexity}, where
every instance $I$ of a problem at hand is accompanied with a \emph{parameter} $k$,
      intended to represent the complexity of the instance at hand.
      We ask
for a \emph{fixed-parameter algorithm} (\emph{FPT algorithm} for short) that solves
an instance $I$ with parameter $k$ in time $f(k) |I|^c$ for some computable function $f$
and a constant $c$. That is, the exponential blow-up in the running time bound, probably
unavoidable for NP-hard problems, is confined to be a function of the parameter only.
For more on parameterized complexity, we refer to a recent textbook~\cite{pa-book}.

In the context of parameterized complexity of the \textsc{FVS} problem,
there is a long line of work
improving the upper bound of the FPT algorithm for the standard parameterization of
the solution size~\cite{Bodlaender94,CaoC015,ChenFLLV08,DowneyF92,Fellows99,GuoGHNW06,KanjPS04,KociumakaP14} (i.e., the input consists of a graph $G$ and a parameter $k$,
    and the goal is to find a feedback vertex set of size at most $k$ or show that no such set exists).
The fastest randomized FPT algorithm for \textsc{FVS}, which runs in time $\Ohstar(3^{k})$, is given by Cygan et al.~\cite{CyganNPPRW11}.%
\footnote{The $\Ohstar$-notation suppresses factors that are polynomial in the input size.}\footnote{Actually in the randomized FPT algorithm for FVS, the parameter is the treewidth of the graph. Since the treewidth of a yes-instance $(G,k)$ to \textsc{FVS} is at most $k+1$, the randomized algorithm for FVS runs in time $\Ohstar(3^{k})$.}
If one asks for a deterministic FPT algorithm, the champion runs in $O^{*}(3.619^{k})$ and is due to Kociumaka and Pilipczuk~\cite{KociumakaP14}.

At the same time, many variants of \textsc{FVS} received significant attention,
   including \textsc{Subset FVS}~\cite{CyganPPW13,IwataWY16,LokshtanovRS18}, \textsc{Group FVS}~\cite{CyganPP16,Guillemot11a,IwataWY16,KratschW12}, \textsc{Connected FVS}~\cite{MisraPRSS12}, or \textsc{Simultaneous FVS}~\cite{AgrawalLMS18}.

In this paper, we focus on the parameterized version of the \textsc{Independent Feedback Vertex Set} problem (\textsc{IFVS}), which is to decide if there exists a feedback vertex set $S$ of size at most $k$ such that no two vertices of $S$ are adjacent in $G$. Misra et al. gave the first FPT algorithm running in time $\Oh(5^{k}n^{\Oh(1)})$ and an $\Oh(k^{3})$ kernel for \textsc{IFVS}~\cite{MisraPRS12}.
Agrawal et al. presented an improved FPT algorithm running in time $\Ohstar(4.1481^{k})$ for \textsc{IFVS}~\cite{ipec16}. In this paper, we propose a faster FPT algorithm.

\begin{theorem}  \label{main}
The \textsc{Independent Feedback Vertex Set} problem, parameterized by the solution size,
   can be solved in $\Ohstar((1+\varphi^{2})^{k})\leq \Ohstar(3.619^{k})$ time, where $\varphi=\frac{1+\sqrt{5}}{2}<1.619$ is the golden ratio.
\end{theorem}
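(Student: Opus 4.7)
I would follow the well-established iterative-compression skeleton and reduce the problem to its \emph{disjoint} variant: given a graph $G$, a parameter $k$, and a feedback vertex set $W \subseteq V(G)$ of size at most $k+1$ (so that $G \setminus W$ is a forest), decide whether there exists an independent set $S \subseteq V(G) \setminus W$ of size at most $k$ such that $G \setminus S$ is a forest. Iterative compression processes the vertices in an arbitrary order, maintains an IFVS of each prefix graph, and calls the above subroutine after guessing the intersection of the sought solution with the current IFVS; the total overhead is $2^{k+1} n^{\Oh(1)}$, well within the target base $1+\varphi^{2} \approx 3.619$.

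For the disjoint instance I would design a branching procedure in the same spirit as the deterministic FVS algorithm of Kociumaka and Pilipczuk~\cite{KociumakaP14}. After standard simplifications --- removing loops and multi-edges, deleting degree-$\leq 1$ vertices, contracting degree-$2$ paths, and forcing into $S$ every free vertex whose two neighbors in $W$ would close a cycle in $G[W]$ --- the algorithm picks a free vertex $v$ according to a pivoting rule and branches on whether $v \in S$ or $v \notin S$. The crucial asymmetry that distinguishes IFVS from FVS is that in the branch $v \in S$ the independence constraint forces every free neighbor of $v$ to move into $W$ as well; thus one gains not only the $-1$ in the budget but also an immediate enlargement of $W$ by the free degree of $v$.

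To convert this extra leverage into the bound $\Ohstar((1+\varphi^{2})^{k})$, I would introduce a refined measure
\[\mu = k + \sum_{v \in V(G)\setminus W} w(v),\]
where $w(v)\in[0,1]$ is a weight that depends on the local structure of $v$ (for instance on the number of its neighbors in $W$, or on whether it lies on a degree-$2$ chain). The weights would be tuned so that, in every branching case, the pair $(a,b)$ of drops of $\mu$ satisfies
\[(1+\varphi^{2})^{-a} + (1+\varphi^{2})^{-b} \leq 1.\]
Using the identity $\varphi^{2} = \varphi + 1$, this target reduces to a Fibonacci-type balance in which the $v \in S$ branch saves extra weight to compensate for the single unit drop of $k$, while the $v \notin S$ branch mainly decreases the weight part of $\mu$.

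The main obstacle is to verify the recurrence in every reduced case simultaneously, and in particular to identify a single pivoting rule that works in configurations where $v$ has very few or very many neighbors in $W$, where short cycles interact with the degree-$2$ contraction rule, or where forcing free neighbors of $v$ into $W$ triggers further forcing because the enlarged $G[W]$ becomes cyclic. The abstract explicitly advertises the measure as the main novelty, and its discovery --- together with a small enough finite case analysis --- is the heart of the proof; once the measure and the pivot rule are fixed, each individual branching inequality is a routine arithmetic verification.
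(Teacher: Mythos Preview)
Your high-level skeleton (reduce to a disjoint variant, then branch with a refined measure) matches the paper, but the concrete proposal has two genuine gaps.

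First, a measure of the form $\mu = k + \sum_{v\notin W} w(v)$ with $w(v)\in[0,1]$ cannot do the job: the number of free vertices is not bounded in terms of $k$ or $|W|$, so the starting value of $\mu$ is unbounded and no per-step drop yields the claimed running time. The paper's measure is of a different nature, namely $\mu = k + \rho - (\eta+\tau)$, where $\rho$ is the number of connected components of $G[W]$, $\eta$ the number of \emph{nice} vertices (deletable, isolated in $F$, $W$-degree~$2$), and $\tau$ the number of \emph{tents} (deletable, isolated in $F$, $W$-degree~$3$). The two crucial features are that $\mu \le k + |W|$ at the start (since $\rho \le |W|$ and $\eta,\tau\ge 0$), and that $\mu \ge 0$ on every YES-instance (a short edge-count in the forest obtained from $G[W\cup N\cup T]\setminus X$ after contracting $W$-components). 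Hence after guessing $Z'\subseteq Z$ with $|Z'|=i$ one has $\mu \le 2(k-i)$.

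Second, your target branching inequality is off by a level. With the $2^{|Z|}$ guessing already spent, the disjoint problem must be solved in time $\Ohstar(\varphi^{\mu})$, not $\Ohstar((1+\varphi^{2})^{\mu})$; the final bound then comes from $\sum_i \binom{k}{i}\varphi^{2(k-i)} = (1+\varphi^{2})^{k}$. So every branching step must satisfy $\varphi^{-a}+\varphi^{-b}\le 1$, i.e.\ be at least as good as $(1,2)$. The paper achieves exactly $(1,2)$ by (i) maintaining, besides $W$, a set $R\subseteq F$ of undeletable-but-not-yet-in-$W$ vertices (free neighbors of a deleted vertex go to $R$, not directly to $W$ as you suggest), (ii) two new reduction rules that push suitable $R$-vertices into $W$ and normalize degree-$2$ neighborhoods, and (iii) a pivot rule based on a \emph{generalized} $W$-degree and counts of \emph{potential} nice vertices and tents, split into three cases. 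When no pivot applies, every free vertex is nice or a tent and the instance is polynomial. You explicitly defer the choice of measure and pivot, but those two choices \emph{are} the proof; without them the sketch does not yet establish the theorem.
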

We remark here that Theorem~\ref{main} is not ``just another'' improvement in the base of the
exponential function, but in some sense ``the end of the road''. The exponential
function of the time bound of Theorem~\ref{main} matches the one of the algorithm
of Kociumaka and Pilipczuk~\cite{KociumakaP14} for the classic \textsc{FVS} problem.
Since \textsc{FVS} trivially reduces to \textsc{IFVS}
(subdivide each edge once), any (deterministic) improvement to the base of the exponential
function of Theorem~\ref{main} would give a similar improvement for \textsc{FVS}.

On the technical side, we follow the standard approach of iterative compression as in Agrawal et al.~\cite{ipec16}
to reduce to a ``disjoint'' version of the problem.
Here, our approach diverges from the one of Agrawal et al~\cite{ipec16}.
We follow a modified measure for the subsequent branching process, somewhat inspired by
the work of Kociumaka and Pilipczuk~\cite{KociumakaP14}.
This improved measure, together with a number of new notions (generalized $W$-degree,
    potential nice vertices and tents), allow us to simplify the algorithm and analysis
as compared to~\cite{ipec16}.

\section{Preliminaries}\label{sec:Preliminaries}
The graphs in our paper are all undirected and may contain multiple edges or loops.
For a graph $G$, we denote its vertex set by $V(G)$ and edge multiset by $E(G)$. For a vertex $v\in V(G)$, we use $N(v)=\{u\in V(G): uv\in E(G)\}$ to denote the \emph{neighborhood} of $v$;
note that $N(v)$ is a set, containing a vertex $u$ only once even in the presence of multiple edges $uv$. We define the \emph{closed neighborhood} of $v$ as $N[v]=N(v)\cup \{v\}$. For a vertex set $A\subseteq V(G)$, the neighborhood of $A$ is $N(A)=\bigcup_{v\in A}N(v) \setminus A$.  For a vertex set $X\subseteq V(G)$, we denote the \emph{induced subgraph} of $X$ by $G[X]$. For simplicity, we use $G\setminus X$ to denote $G[V(G)\setminus X]$. For a vertex set $X\subseteq V(G)$ and $v\in V(G)$, we define \emph{$X$-degree} of $v$ as the number of edges with one endpoint being $v$
and the other lying in $X$, and we denote it by $\deg_{X}(v)$. Note that the $X$-degree counts edges with multiplicities.
A \emph{connected component} is a maximal connected subgraph. Contracting a connected subgraph $H$ is the operation of replacing the subgraph $H$ with a vertex $v_H$
and every edge $xy$ with $x \in V(H)$ and $y \in V(G) \setminus V(H)$ with an edge $v_Hy$ (keeping multiplicities).

In the realms of parameterized complexity, every instance $I$ of a problem at hand is accompanied with a parameter $k$.
A \emph{fixed-parameter algorithm} (\emph{FPT algorithm} for short) solves an instance $I$ with parameter $k$ in time $f(k) |I|^c$ for some computable function $f$
and a constant $c$.
A \emph{kernel} of size $g(k)$ for some computable function $g$ is a polynomial-time procedure that reduces an instance $I$ with parameter $k$ to an equivalent instance with size and parameter value bounded by $g(k)$.

\section{An Algorithm for \textsc{Independent Feedback Vertex Set}}

Given an instance $(G,k)$, we first invoke the $\Ohstar((1+\varphi^2)^k)$-time FPT algorithm for
the classic \textsc{FVS} problem~\cite{KociumakaP14}.
If the algorithm returns NO, we conclude that there is no independent feedback vertex set of size at most $k$ since an independent feedback vertex set is also a feedback vertex set.
Otherwise, the algorithm returns a feedback vertex set $Z$ such that $|Z|\leq k$. Obviously, $F=G\setminus Z$ is a forest.

Suppose there is a solution $S$ for the input instance $(G,k)$.
The algorithm branches into $2^{|Z|}$ directions,
guessing a subset $Z'$ of $Z$ such that $S\cap Z=Z'$.
Let $W = Z \setminus Z'$.
If $G[Z']$ is not an independent set or $G[W]$ is not a forest, the algorithm rejects this guess.
Hence, we can assume that $G[Z']$ is an independent set and $G[W]$ is a forest.
Let $R=N(Z')\cap V(F)$.
Since the solution $S$ is an independent set and $Z' \subseteq S$, we have $R \cap S = \emptyset$.
Then the algorithm tries to find an independent feedback vertex set $S'\subseteq V(F)$ for $G\setminus Z'$ such that $S'\cap R=\emptyset$ and $|S'|\leq k-|Z'|$.
Note that at this point we cannot delete the vertices from $W$ nor $R$ from the graph as, albeit undeletable, they take part in the structure of cycles in $G$ that we are to break.
Following Agrawal et al.~\cite{ipec16}, we call this subproblem \textsc{Disjoint Independent Feedback Vertex Set} (\textsc{DIS-IFVS} for short).
We give a faster FPT algorithm for \textsc{DIS-IFVS} in the next section.
The algorithm tries every possible $Z'\subseteq Z$ and solves the corresponding subproblem of
\textsc{DIS-IFVS}. If the algorithm finds a YES instance of \textsc{DIS-IFVS}, then it returns YES for the instance $(G,k)$ of \textsc{IFVS}. Otherwise, if the algorithm tries every possible $Z'\subseteq Z$ and obtains a NO answer for every corresponding instance of \textsc{DIS-IFVS}, it reports that $(G,k)$ is a NO instance.

\subsection{\textsc{Disjoint Independent Feedback Vertex Set}}\label{sec:DIS-IFVS}

We start with a formal definition of the problem.

\begin{quote}
\textsc{Disjoint Independent Feedback Vertex Set}\\
\textbf{Input:} An undirected (multi)graph $G$, a feedback vertex set $W$ of $G$, $R\subseteq V(G)\setminus W$, and an integer $k$.  \\
\textbf{Question:} Is there an independent feedback vertex set $X\subseteq V(G)\setminus(W\cup R)$ for $G$ such that $|X|\leq k$?\\
\end{quote}
Let $F=V(G\setminus W)$. Obviously, $G[F]$ is a forest since $W$ is a feedback vertex set of $G$.
A vertex $v\in F \setminus R$ is a \emph{nice vertex} if $\deg_W(v)=2$ and $v$ has no neighbors in $F$. A vertex $v\in F \setminus R$ is a \emph{tent} if $\deg_W(v)=3$ and $v$ has no neighbors in $F$.

As mentioned earlier, we rely on a measure different from the one in \cite{ipec16}.
The measure $\mu$ of an instance $(G,W,R,k)$ is defined as
$$\mu=k+\rho-(\eta+\tau).$$
Here, $\rho$ represents the number of connected components of $G[W]$,
 $\eta$ is the number of nice vertices in $F \setminus R $ and $\tau$ is the number of tents in $F \setminus R$.

We remark that the distinction between sets $W$ and $R$ is purely for the sake of complexity of the algorithm. The set of feasible solutions to a \textsc{DIS-IFVS}
instance $(G, W, R, k)$ would be the same if we move vertices from $R$ to $W$. However, the notions of tents, nice vertices, and the measure $\mu$ strongly depends on the distinction between
the sets $W$ and $R$. The algorithm maintains this distinction to ensure the promised running time bound.

Our main technical result is the following.
\begin{lemma}\label{technical}
A \textsc{Disjoint Independent Feedback Vertex Set} instance $I$ with measure $\mu$
can be solved in time $\Ohstar(\varphi^{\mu})$, where $\varphi=\frac{1+\sqrt{5}}{2}$ is the golden ratio.
\end{lemma}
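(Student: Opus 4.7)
The plan is to prove the lemma by strong induction on the measure $\mu$, following the classical ``branching with a clever measure'' paradigm for feedback-vertex-set problems. I would design a branching algorithm for \textsc{DIS-IFVS} that alternates between polynomial-time simplification rules, each required not to increase $\mu$, and a single branching step that drops $\mu$ by at least $2$ in one subproblem and by at least $1$ in the other. Since $\varphi$ satisfies $\varphi^{2}=\varphi+1$, such a step gives the Fibonacci-type recurrence $T(\mu)\le T(\mu-1)+T(\mu-2)+\mathrm{poly}$, whose solution is $T(\mu)=\Ohstar(\varphi^{\mu})$.

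The first phase is a simplification phase. I would collect a suite of local rules that handle loops, multi-edges, and pendant structures: reject if $k<0$; if a vertex in $W\cup R$ carries a self-loop, reject; if $v\in F\setminus R$ carries a self-loop, force $v\in X$ and recurse; and delete any vertex of $F\setminus R$ whose $G$-degree is at most $1$, since it lies on no cycle. Forced moves must also be included: when a nice vertex or a tent has two of its $W$-neighbors in the same connected component of $G[W]$, the corresponding cycle can only be broken by putting that vertex into $X$, so we add it to $X$ and decrement $k$; conversely, when all $W$-neighbors of a nice vertex lie in distinct components, we may absorb the vertex into $W$ and merge those components, reducing $\rho$. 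For each rule I would verify that the change in $\mu=k+\rho-(\eta+\tau)$ is non-positive.

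Once no simplification rule applies, I would pick a branching vertex $v\in F\setminus R$; the natural choice is a leaf of the forest $G[F]$ with $\deg_{W}(v)\ge 2$, and I would argue that a non-trivial simplified instance always contains such a vertex. The algorithm then branches on whether $v\in X$. In the branch $v\in X$, the value of $k$ drops by $1$, $v$ is deleted, and every $F$-neighbor of $v$ lying in $F\setminus R$ is moved into $R$; the moved neighbors may themselves gain nice-vertex or tent status, contributing further to the drop in $\eta+\tau$. In the branch $v\notin X$, we move $v$ to $R$ and further absorb $v$ into $W$, which, since $v$'s $W$-neighbors are spread over at least two distinct $G[W]$-components after simplification, decreases $\rho$ by at least $1$; additionally, $F$-neighbors of $v$ may become nice or tent, giving further decrement of $\mu$.

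The main obstacle I foresee is the exhaustive case analysis required to guarantee the Fibonacci recurrence uniformly, particularly in boundary cases: when $v$ has few $F$-neighbors, when $\deg_{W}(v)$ is exactly $2$ with an intricate component structure of $G[W]$ around it, or when several simplification rules interact delicately. I expect that the refined notions alluded to in the introduction—generalized $W$-degree, potential nice vertices, and potential tents—are introduced precisely to absorb the bookkeeping complications: they should let us charge marginal $\mu$-drops arising from latent nice vertices or tents uniformly across cases, so that the balance sheet after either branch can be controlled with a single formula. Assembling those definitions and pushing the two-branch analysis through every residual configuration is where the bulk of the technical work will lie.
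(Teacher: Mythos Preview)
Your high-level framework is right---alternating measure-preserving reductions with a two-way branch aiming for a $(1,2)$ vector is exactly what the paper does---but the concrete branching step you propose does not deliver that vector. Take a leaf $v$ of $G[F]$ with $\deg_W(v)=2$, its two $W$-neighbours in different components of $G[W]$, and its unique $F$-neighbour $u$ satisfying $\deg_W(u)=0$ with several further $F$-neighbours each of degree at least~$3$. In the branch $v\in X$ only $k$ drops, giving $\Delta\mu=1$; in the branch $v\to W$ only two $W$-components merge, again $\Delta\mu=1$. Nothing near $u$ becomes nice or a tent in either branch, so you obtain a $(1,1)$ recursion and merely a $2^{\mu}$ bound. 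This is precisely why the paper does \emph{not} branch on leaves: it defines P-nice vertices, P-tents, $\Gdeg$, and $\Tdeg$, and selects as pivot a vertex with $\Gdeg(v)\ge 3$, or $\Gdeg(v)\ge 1$ together with $\Tdeg(v)\ge 1$, or $\Tdeg(v)\ge 2$. The ``potential'' neighbours are exactly what guarantee the extra unit of drop in one branch (Observations~\ref{obs:nice-W}--\ref{obs:tent-W}); you correctly guessed that these notions are needed, but they drive the \emph{choice} of pivot, not just the bookkeeping afterwards.

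Two further gaps. Your rule ``absorb a nice vertex into $W$ when its $W$-neighbours lie in distinct components'' is unsafe: another vertex $v'\in R$ may connect the same two components, forcing $v$ into any solution, so committing to $v\notin X$ loses correctness. The paper never absorbs nice vertices or tents; it keeps them, and once \emph{every} vertex of $F$ is nice or a tent it invokes a polynomial-time base case (Lemma~\ref{poly}). You are also missing the structural argument (Lemma~\ref{poly2}) that when no reduction rule and no branching case applies the instance has reached this all-nice-or-tent form; without it the recursion has no termination guarantee.
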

Theorem~\ref{main} follows by standard analysis as in~\cite{ipec16}:
\begin{proof}[Proof of Theorem~\ref{main}]
The algorithm for \textsc{FVS} of~\cite{KociumakaP14} runs in time $\Ohstar((1+\varphi^2)^{k})$.
In a branch with a set $Z' \subseteq Z$ the routine for \textsc{DIS-IFVS} is passed
an instance with both $W=Z\setminus Z'$ and the parameter bounded by $k-|Z'|$, and hence with
measure bounded by $2(k-|Z'|)$.
Since the algorithm for \textsc{DIS-IFVS} runs in time $O^{*}(\varphi^{\mu})$,
the total running time of its applications
is bounded by
$$\sum_{i=0}^k \binom{k}{i} \Ohstar(\varphi^{2(k-i)}) = \Ohstar((1+\varphi^{2})^{k}) \leq \Ohstar(3.619^{k}).$$
This completes the proof.
\end{proof}

The remainder of this section is devoted to the proof of Lemma~\ref{technical}.
We start with showing that $\mu$ is nonnegative on YES instances.
\begin{lemma} \label{measure}
Let $I=(G,W,R,k)$ be a YES instance of \textsc{DIS-IFVS}.
Then $\mu \geq 0$.
\end{lemma}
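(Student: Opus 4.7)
The plan is to fix a hypothetical solution $X$ and exploit the fact that $H := G \setminus X$ is a forest. Let $\eta'$ and $\tau'$ denote the number of nice vertices and tents, respectively, that lie in $X$. Since $X \subseteq F \setminus R$ and the sets of nice vertices and of tents form disjoint subsets of $F \setminus R$, we immediately have $\eta' + \tau' \le |X| \le k$. The remaining $\eta - \eta'$ nice vertices and $\tau - \tau'$ tents sit inside $V(H)$ with degree exactly $2$ and $3$ respectively, and all of their neighbours lie in $W$ (since by definition they have no $F$-neighbours). Because $X \cap W = \emptyset$, the induced subgraph $H[W]$ coincides with $G[W]$, so the $\rho$ components of $G[W]$ are sub-trees of the forest $H$.

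The first key step is a structural claim: every nice vertex $v \notin X$ must have its two $W$-neighbours in two \emph{distinct} components of $G[W]$, and every tent $t \notin X$ must have its three $W$-neighbours in three distinct components. Otherwise, concatenating two of the edges at $v$ (or $t$) with the unique path inside the common $G[W]$-component would produce a cycle in $H$, contradicting that $H$ is a forest.

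With this in hand, I would build an auxiliary multigraph $H^{*}$ as a minor of $H$ in three stages: (i) contract each component of $G[W]$ to a single ``$W$-vertex'', obtaining $\rho$ such vertices; (ii) contract each nice vertex outside $X$ along one of its two incident edges, replacing it with a single edge between its two (distinct) $W$-vertex neighbours; and (iii) delete every remaining vertex of $F \setminus X$ apart from the tents outside $X$. Each operation is a valid minor step, so $H^{*}$ is again a forest. It has $\rho + (\tau - \tau')$ vertices, and because tents and nice vertices have no $F$-neighbours, every edge of $H^{*}$ is either a contracted ``nice edge'' or an edge incident to a surviving tent; this gives exactly $(\eta - \eta') + 3(\tau - \tau')$ edges.

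Plugging the forest bound $|E(H^{*})| \le |V(H^{*})| - 1$ (valid whenever $V(H^{*}) \neq \emptyset$) and rearranging yields $(\eta - \eta') + 2(\tau - \tau') \le \rho - 1$, and combining with $\eta' + \tau' \le k$ gives
\[ \eta + \tau \;\le\; (\eta - \eta') + 2(\tau - \tau') + (\eta' + \tau') \;\le\; k + \rho - 1, \]
so $\mu \ge 1$. In the degenerate case $V(H^{*}) = \emptyset$, we must have $\rho = 0$, hence $W = \emptyset$; but then no nice vertex or tent can exist at all, and $\mu = k \ge 0$ is immediate. The step I expect to need the most care is the structural claim together with verifying that the three-step minor construction introduces no spurious edges and produces exactly the vertex and edge counts above; once those are pinned down, the remaining arithmetic is routine.
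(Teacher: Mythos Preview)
Your proof is correct and follows essentially the same approach as the paper: both contract the components of $G[W]$ and apply the forest edge bound $|E| \le |V|-1$ to the resulting auxiliary graph. The only cosmetic difference is that the paper keeps the surviving nice vertices as degree-$2$ vertices in $\tilde H$ (so that the bound reads $2|N\setminus X|+3|T\setminus X|\le \rho+|N\setminus X|+|T\setminus X|-1$), whereas you further contract each nice vertex into an edge; this lets the paper skip your explicit ``distinct components'' structural claim, but the counting and the conclusion are identical.
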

\begin{proof}
Let $X$ be a solution to the instance $I$. Thus $G'=G\setminus X$ is a forest. Let $N\subseteq V(G)\setminus (W\cup R)$ be the set of nice vertices and $T\subseteq V(G)\setminus (W\cup R)$ be the set of tents. Since $X \cap W = \emptyset$, we have that $H := G[W\cup (N \setminus X) \cup (T \setminus X)]$ is a forest.
Now we contract each component in $H[W]$ into a single vertex and get a forest $\tilde{H}$. Since there are at most $\rho+|N\setminus X|+|T\setminus X|$ vertices in $\tilde{H}$, there are at most $\rho+|N\setminus X|+|T\setminus X|-1$ edges in $\tilde{H}$. According to the definition of tents and nice vertices, $(N\cup T)\setminus X$ is an independent set. Moreover, since the degree of any vertex in $N \setminus X$ and $T \setminus X$ is 2 and 3, respectively, we get the following inequality:
$$2|N\setminus X|+3|T\setminus X|\leq |E(\tilde{H})|\leq \rho+|N\setminus X|+|T\setminus X|-1.$$
It follows that:
$$|N\setminus X|+|T\setminus X|\leq |N\setminus X|+2|T\setminus X| \leq \rho.$$
Hence, as $|X| \leq k$,
$$|N|+|T| \leq \rho+k.$$
As a result, $\mu=\rho+k-(\eta+\tau) \geq 0$.
\end{proof}
A small comment is in place. Our measure $\mu$ is different from the one of~\cite{ipec16}:
$\mu'=2k+\rho-(\eta+2\tau)$. The change in the measure is one of the critical insights
in this paper: while it sometimes leads to weaker branching vectors as compared
to~\cite{ipec16}, the ``starting value'' in an application in the
above proof of Theorem~\ref{main} is $2(k-|Z'|)$, not $3(k-|Z'|)$ as in~\cite{ipec16}.
Thus, to obtain the promised running time bound, we are fine with branching vectors
of the form $(1,2)$; that is, we are fine with branching steps in two directions,
where in one direction the measure drops by at least one, and in the other direction
by at least two.
The change in the measure is similar to the one that happened in the work of Kociumaka
and Pilipczuk for \textsc{FVS}~\cite{KociumakaP14},
as compared to a previous champion of Cao, Chen, and Liu~\cite{CaoC015}.

\medskip

We introduce now some definitions that will help us streamline later arguments.
Let $(G,W,R,k)$ be an instance of \textsc{DIS-IFVS} and let $F=V(G)\setminus W$.
We say that $u\in F \setminus R$ is a \emph{potential nice vertex} or \emph{P-nice} if
$u$ is of degree $2$ and exactly one of its incident edges has a second endpoint in $W$.
For a vertex $v$ in $G[F]$, we define the \emph{nice degree} of $v$, denoted by $\Sdeg(v)$,
   as the number of P-nice neighbors of $v$.
A \emph{generalized degree} of $v$ is $\Gdeg(v)=\Sdeg(v)+\deg_{W}(v)$.
We say that $u\in F \setminus R$ is a \emph{potential tent} or \emph{P-tent} if $\Gdeg(u)=2$
and $\deg(u) = 3$.
For a vertex $v$ in $F$, we define the \emph{tent degree}
of $v$, denoted by $\Tdeg(v)$, as the number of neighbors of $v$ that are P-tents.
See Figure~\ref{pnice} for an illustration.

\begin{figure}[htb]
\begin{center}
\begin{tikzpicture}[scale=0.7]
    \draw(0,4) ellipse (4cm and 2cm);
    \draw (0,0) ellipse (4cm and 1cm);
    \node at (1,5) {\Large $F$ };
    \node (v6) at (0,-0.5){\Large $W$};
     \node (v7) at (-3,5.55) {};
     \node (v8) at (-3,2.45) {};
     \draw[bend left=20] (v7) to (v8);

      \node at (-3.3,4.5) {\Large R};

      \node[style={fill=black,circle,draw,inner sep=1pt}] (v10) at (-2,0.3) {};
      \node[style={fill=black,circle,draw,inner sep=1pt}] (v11) at (-3.2,0.3) {};
      \node[style={fill=black,circle,draw,inner sep=1pt}] (v101) at (-2,2.8) {};
      \draw (v101) to (v10);
      \draw (v101) to (v11);
      \node[style={fill=black,circle,draw,inner sep=1pt}] (v12) at (-1.5,3) {};
      \node[style={fill=black,circle,draw,inner sep=1pt}] (v13) at (-1,0.3) {};
      \node[style={fill=black,circle,draw,inner sep=1pt}] (v14) at (0,0.3) {};
      \draw (v10) to (v12);
      \draw (v13) to (v12);
      \draw (v14) to (v12);
      \node (nice) at (-2.2,3.8) {nice};
      \draw [->,red] (-2.2,3.5) to (-2,3);
      \node (tent) at (-1.2,4.3) {tent};
      \draw [->,red] (-1.3,4.0) to (-1.5,3.2);
      
      \node[style={fill=black,circle,draw,inner sep=1pt}] (x2) at (0.6,0.3) {};
      \node[style={fill=black,circle,draw,inner sep=1pt}] (x1) at (0.3,2.8) {};
      \node[style={fill=black,circle,draw,inner sep=1pt}] (x3) at (-0.5,3.0) {};
      \draw (v14) to (x1);
      \draw (x2) to (x1);
      \draw (x3) to (x1);
      \node[style={fill=black,circle,draw,inner sep=1pt}] (y2) at (1.6,0.3) {};
      \node[style={fill=black,circle,draw,inner sep=1pt}] (y1) at (1.6,2.8) {};
      \node[style={fill=black,circle,draw,inner sep=1pt}] (y3) at (2.2,3.0) {};
      \draw (y2) to (y1);
      \draw (y3) to (y1);
      
      \node (nice) at (0.2,3.8) {$P$-tent};
      \draw [->,red] (0.2,3.5) to (0.3,3);
      \node (tent) at (2.0,4) {$P$-nice};
      \draw [->,red] (2.0,3.7) to (1.7,3.0);
\end{tikzpicture}
\end{center}
\caption{A nice vertex, a tent, a $P$-tent, and a $P$-nice vertex.} \label{pnice}
\end{figure}
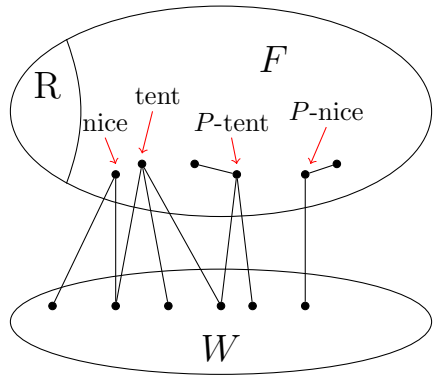

\subsection{Reduction Rules for \textsc{DIS-IFVS}}\label{sec:Reduction}

Now we introduce some reduction rules for DIS-IFVS. We always apply the applicable reduction rule of the lowest number. First, let us introduce five reduction rules from \cite{ipec16}.\\

\noindent \textbf{Reduction Rule 1}: Delete any vertex of degree at most one. \\

\noindent \textbf{Reduction Rule 2}: Let $u$, $v$ be two adjacent vertices of $G \setminus W$ that are both not nice and both of degree $2$ in $G$.
Let $x$ be the second neighbor of $u$ and let $y$ be the second neighbor of $v$ ($x$ and $y$ could be the same vertex).
If neither $u$ nor $v$ is in $R$ or both are in $R$,
then delete one vertex in $\{u,v\}$ arbitrarily, say $u$, and connect the two neighbors $u$ (i.e., $v$ and the other neighbor of $u$) with a new edge.
If exactly one of $u$ and $v$ is in $R$, say $v\in R$, then delete $v$ and add an edge between its neighbors (i.e., an edge $uy$).\\

\noindent \textbf{Reduction Rule 3}: If $k<0$ or $\mu < 0$,
  return that the input instance is a NO instance. \\

\noindent \textbf{Reduction Rule 4}: If there is a vertex $v\in R$ such that $v$ has two incident edges with the second endpoints in the same component of $W$, then return that the input instance is a NO instance.  \\

\noindent \textbf{Reduction Rule 5}: If there is a vertex $v\in F\setminus R$ such that $v$ has at least two incident edges with the second endpoints in the same component of $W$, then remove $v$ from $G$ and add all vertices in $F\cap N(v)$ to $R$. In this case, $k$ decreases by one.\\

It is not difficult to verify the safeness of Reduction Rules $1-5$ as shown in \cite{ipec16}. But when analyzing Reduction Rules $1$ and $5$,
we need to be careful since we use a different measure $\mu=k+\rho-(\eta+\tau)$.
In Reduction Rule 1, if one deletes a neighbor $w$ of a tent or a nice vertex $v$, then $v$ stops being a tent or a nice vertex ($\eta+\tau$ could decrease by one),
but also $\{w\}$ stops being a connected component of $G[W]$ (decreasing $\rho$ by one).
For Reduction Rule 5, it may happen that $v$ is a tent or a nice vertex, and its deletion decreases $\eta+\tau$ by one.
However, the removal of $v$ also decreases $k$ by one. Thus $\mu$ does not increase.\\

Now we introduce two new reduction rules.

\medskip

\noindent \textbf{Reduction Rule 6}: If there is a vertex $v\in R$ such that $\Gdeg(v)\geq 1$ or $\Tdeg(v)\geq 1$, then remove $v$ from $R$ and add $v$ to $W$. \\

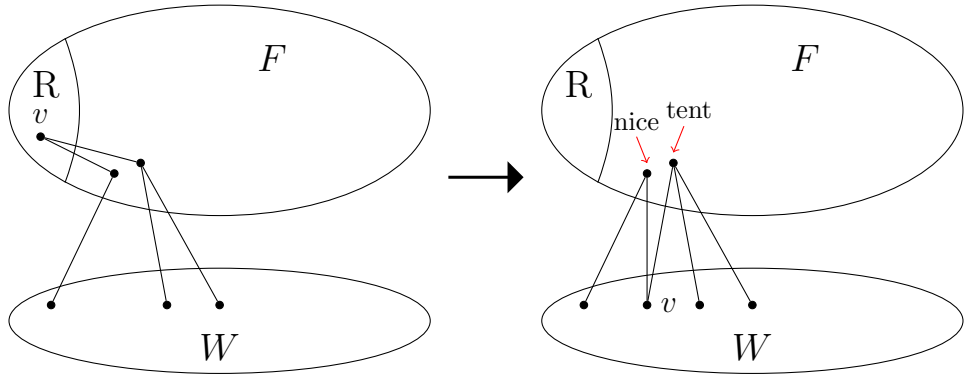
\begin{figure}[htb]
\begin{center}
\begin{tikzpicture}[scale=0.7]
    \draw(0,4) ellipse (4cm and 2cm);
    \draw (0,0) ellipse (4cm and 1cm);
    \node at (1,5) {\Large $F$ };
    \node (v6) at (0,-0.5){\Large $W$};
     \node (v7) at (-3,5.55) {};
     \node (v8) at (-3,2.45) {};
     \draw[bend left=20] (v7) to (v8);

      \node at (-3.3,4.5) {\Large R};

      \node[label=above:{\large $v$},style={fill=black,circle,draw,inner sep=1pt}] (v10) at (-3.4,3.5) {};
      \node[style={fill=black,circle,draw,inner sep=1pt}] (v101) at (-2,2.8) {};
      \node[style={fill=black,circle,draw,inner sep=1pt}] (v11) at (-3.2,0.3) {};
      \draw (v101) to (v10);
      \draw (v101) to (v11);
      \node[style={fill=black,circle,draw,inner sep=1pt}] (v12) at (-1.5,3) {};
      \node[style={fill=black,circle,draw,inner sep=1pt}] (v13) at (-1,0.3) {};
      \node[style={fill=black,circle,draw,inner sep=1pt}] (v14) at (0,0.3) {};
      \draw (v10) to (v12);
      \draw (v13) to (v12);
      \draw (v14) to (v12);
\end{tikzpicture}
\begin{tikzpicture}
\node (s1) at (0,0) {};
\node (s2) at (1,0) {};
\tikzset{>=triangle 90}
\draw [->,line width=0.4mm] (0,2.5)-- (1,2.5);
\end{tikzpicture}
\begin{tikzpicture}[scale=0.7]
    \draw(0,4) ellipse (4cm and 2cm);
    \draw (0,0) ellipse (4cm and 1cm);
    \node at (1,5) {\Large $F$ };
    \node (v6) at (0,-0.5){\Large $W$};
     \node (v7) at (-3,5.55) {};
     \node (v8) at (-3,2.45) {};
     \draw[bend left=20] (v7) to (v8);

      \node at (-3.3,4.5) {\Large R};

      \node[label=right:{\large $v$},style={fill=black,circle,draw,inner sep=1pt}] (v10) at (-2,0.3) {};
      \node[style={fill=black,circle,draw,inner sep=1pt}] (v11) at (-3.2,0.3) {};
      \node[style={fill=black,circle,draw,inner sep=1pt}] (v101) at (-2,2.8) {};
      \draw (v101) to (v10);
      \draw (v101) to (v11);
      \node[style={fill=black,circle,draw,inner sep=1pt}] (v12) at (-1.5,3) {};
      \node[style={fill=black,circle,draw,inner sep=1pt}] (v13) at (-1,0.3) {};
      \node[style={fill=black,circle,draw,inner sep=1pt}] (v14) at (0,0.3) {};
      \draw (v10) to (v12);
      \draw (v13) to (v12);
      \draw (v14) to (v12);
      \node (nice) at (-2.2,3.8) {nice};
      \draw [->,red] (-2.2,3.5) to (-2,3);
      \node (tent) at (-1.2,4) {tent};
      \draw [->,red] (-1.3,3.7) to (-1.5,3.2);
\end{tikzpicture}
\end{center}
\caption{Reduction Rule $6$} \label{Reduction6}
\end{figure}

\noindent \textbf{Reduction Rule 7}: If there is a vertex $v\in F\setminus R$
such that every neighbor $w \in N(v) \setminus (W \cup R)$ is of degree $2$,
and at least one such neighbor exists, then put $N(v) \setminus (W \cup R)$ into $R$.

\begin{figure}[htb]
\begin{center}
\begin{tikzpicture}[scale=0.7]
    \draw(0,4) ellipse (4cm and 2cm);
    \draw (0,0) ellipse (4cm and 1cm);
    \node at (3,4) {\Large $F$ };
    \node (v6) at (0,-0.5){\Large $W$};
     \node (v7) at (-1,6.1) {};
     \node (v8) at (-1,1.9) {};
     \draw[bend left=20] (v7) to (v8);

      \node at (-3.5,4) {\Large R};

      \node[label=right:{\large $v$},style={fill=black,circle,draw,inner sep=1pt}] (v11) at (0,4) {};
      \node[label=below:{\large $w_{3}$},style={fill=black,circle,draw,inner sep=1pt}] (w3) at (-1,4.3) {};
      \node[style={fill=black,circle,draw,inner sep=1pt}] (w31) at (-2.5,4) {};
      \node[style={fill=black,circle,draw,inner sep=1pt}] (w311) at (-3,4.5) {};

      \node[style={fill=black,circle,draw,inner sep=1pt}] (w313) at (-3.2,4) {};
      \node[style={fill=black,circle,draw,inner sep=1pt}] (w32) at (-2,4.8) {};
      \draw (v11) to (w3);
      \draw (w3) to (w31);
      \draw (w31) to (w311);

      \draw (w31) to (w313);
      \draw (w3) to (w32);

      \node[label=above:{\large $w_{4}$},style={fill=black,circle,draw,inner sep=1pt}] (w4) at (-1.2,5) {};
      \node[style={fill=black,circle,draw,inner sep=1pt}] (w41) at (0,5.2) {};
      \node[style={fill=black,circle,draw,inner sep=1pt}] (w411) at (0.6,5.8) {};

      \node[style={fill=black,circle,draw,inner sep=1pt}] (w413) at (1,4.8) {};
      \draw (v11) to (w4);
      \draw (w4) to (w41);
      \draw (w41) to (w411);

      \draw (w41) to (w413);

      \node[label=left:{\large $w_{1}$},style={fill=black,circle,draw,inner sep=1pt}] (v12) at (0.5,3) {};

      \node[style={fill=black,circle,draw,inner sep=1pt}] (v14) at (1,0.3) {};
      \node[style={fill=black,circle,draw,inner sep=1pt}] (v15) at (0,0.3) {};

      \node[label=right:{\large $w_{2}$},style={fill=black,circle,draw,inner sep=1pt}] (v17) at (1.5,3) {};
      \draw (v11) to (v12);
      \draw (v11) to (v17);
      \draw (v15) to (v12);
      \draw (v14) to (v17);
\end{tikzpicture}
\begin{tikzpicture}
\node (s1) at (0,0) {};
\node (s2) at (1,0) {};
\tikzset{>=triangle 90}
\draw [->,line width=0.4mm] (0,2.5)-- (1,2.5);
\end{tikzpicture}
\begin{tikzpicture}[scale=0.7]
    \draw(0,4) ellipse (4cm and 2cm);
    \draw (0,0) ellipse (4cm and 1cm);
    \node at (3,4) {\Large $F$ };
    \node (v6) at (0,-0.5){\Large $W$};
     \node (v7) at (-1,6.1) {};
     \node (v8) at (-1,1.9) {};
     \draw[bend left=20] (v7) to (v8);

      \node at (-3.5,4) {\Large R};


      \node[label=right:{\large $v$},style={fill=black,circle,draw,inner sep=1pt}] (v11) at (0,4) {};
      \node[label=below:{\large $w_{3}$},style={fill=black,circle,draw,inner sep=1pt}] (w3) at (-1,4.3) {};
      \node[style={fill=black,circle,draw,inner sep=1pt}] (w31) at (-2.5,4) {};
      \node[style={fill=black,circle,draw,inner sep=1pt}] (w311) at (-3,4.5) {};

      \node[style={fill=black,circle,draw,inner sep=1pt}] (w313) at (-3.2,4) {};
      \node[style={fill=black,circle,draw,inner sep=1pt}] (w32) at (-2,4.8) {};
      \draw (v11) to (w3);
      \draw (w3) to (w31);
      \draw (w31) to (w311);

      \draw (w31) to (w313);
      \draw (w3) to (w32);

      \node[label=above:{\large $w_{4}$},style={fill=black,circle,draw,inner sep=1pt}] (w4) at (-1.2,5) {};
      \node[style={fill=black,circle,draw,inner sep=1pt}] (w41) at (0,5.2) {};
      \node[style={fill=black,circle,draw,inner sep=1pt}] (w411) at (0.6,5.8) {};

      \node[style={fill=black,circle,draw,inner sep=1pt}] (w413) at (1,4.8) {};
      \draw (v11) to (w4);
      \draw (w4) to (w41);
      \draw (w41) to (w411);

      \draw (w41) to (w413);

      \node[label=left:{\large $w_{1}$},style={fill=black,circle,draw,inner sep=1pt}] (v12) at (-2,3) {};

      \node[style={fill=black,circle,draw,inner sep=1pt}] (v14) at (1,0.3) {};
      \node[style={fill=black,circle,draw,inner sep=1pt}] (v15) at (0,0.3) {};

      \node[label={[xshift=5mm, yshift=-3mm]\large $w_{2}$},style={fill=black,circle,draw,inner sep=1pt}] (v17) at (-1,3) {};
      \draw (v11) to (v12);
      \draw (v11) to (v17);
      \draw (v15) to (v12);
      \draw (v14) to (v17);
\end{tikzpicture}
\end{center}
\caption{Reduction Rule $7$} \label{Reduction7}
\end{figure}
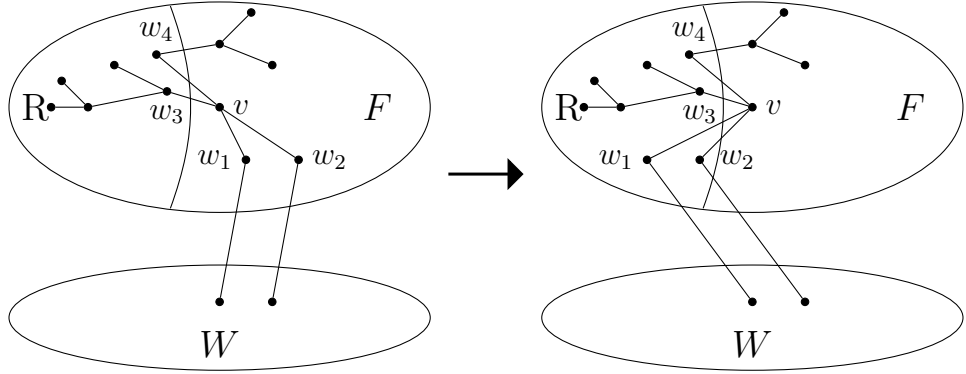
\medskip

We first show their safeness.

\begin{claim}
Reduction Rules $6$ and $7$ are safe.
\end{claim}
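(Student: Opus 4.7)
My proof proposal is to verify two things for each rule: (i) the reduced instance is a YES instance of \textsc{DIS-IFVS} iff the original is (solution preservation), and (ii) the measure $\mu = k + \rho - (\eta + \tau)$ does not increase. The first ensures correctness; the second is essential because reduction rules must not undo the potential that the subsequent branching analysis of Lemma~\ref{technical} relies on.

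\textbf{Rule 6.} Solution preservation is immediate: $v$ is forbidden from any candidate solution both before and after the reclassification (since $v \in R$ initially and $v \in W$ afterwards), and enlarging $W$ cannot destroy the feedback-vertex-set property. For the measure, $k$ is unchanged, and already-existing nice vertices and tents are unaffected because, by definition, they have no $F$-neighbors and thus are not adjacent to $v \in R \subseteq F$. I would then case-split on $\deg_W(v)$. If $\deg_W(v) \geq 1$, then adjoining $v$ to $W$ cannot create a new component of $G[W]$, so $\rho$ does not increase; combined with $\eta+\tau$ being non-decreasing, $\mu$ does not increase. If $\deg_W(v)=0$ then by the triggering precondition we must have $\Sdeg(v) \geq 1$ or $\Tdeg(v) \geq 1$; here $\rho$ grows by exactly one, and I need to locate a newly-created nice vertex or tent to compensate. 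Each P-nice neighbor $u$ of $v$ has its unique non-$W$ edge running to $v$, so after the swap both of $u$'s edges lie in $W \cup \{v\}$ and $u$ becomes nice; analogously, a P-tent neighbor $u$ of $v$ whose remaining two neighbors already lie in $W$ becomes a tent.

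\textbf{Rule 7.} Here I plan a local swap argument. Let $X$ be a solution to the original instance and set $U = X \cap (N(v) \setminus (W \cup R))$. If $U = \emptyset$ then $X$ already avoids the newly enlarged $R$. Otherwise define $X' = (X \setminus U) \cup \{v\}$ and argue it is a valid solution. We have $|X'| \leq |X| \leq k$; the vertex $v$ lies in $F \setminus R$ under the new classification since the newly added $R$-vertices are exactly the members of $N(v) \setminus (W \cup R)$, not $v$ itself; independence follows because the remaining neighbors of $v$ all lie in $W \cup R$ and are hence outside $X'$. For acyclicity of $G \setminus X'$, we obtain this graph from the forest $G \setminus X$ by deleting $v$ and re-adding each $u \in U$; since $\deg(u)=2$ with one neighbor $v$ now absent, $u$ re-enters as at most a pendant, which cannot create a cycle. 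For the measure, $W$ is untouched so $\rho$ is unchanged, and since nice vertices and tents have no $F$-neighbors they cannot lie in $N(v)$, so moving $N(v) \setminus (W \cup R)$ into $R$ strips no vertex of its nice/tent status and $\eta+\tau$ is preserved.

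\textbf{Main obstacle.} The delicate bookkeeping is Rule 6 in the subcase $\deg_W(v)=0$, $\Sdeg(v)=0$, and $\Tdeg(v) \geq 1$: here tent creation must be traced precisely, since a P-tent neighbor $u$ of $v$ is converted into a tent only when its other two neighbors already lie in $W$, so that moving $v$ to $W$ fills $u$'s $W$-degree to $3$ and eliminates its last $F$-neighbor. I expect the cleanest presentation to treat the three triggering conditions $\deg_W(v)\geq 1$, $\Sdeg(v) \geq 1$, and $\Tdeg(v) \geq 1$ as overlapping but separate subcases with their own local accounting on $\deg_W(u)$ and $\Sdeg(u)$, rather than trying to encode everything in one global inequality.
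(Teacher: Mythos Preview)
Your solution-preservation arguments for both rules are correct and essentially identical to the paper's: Rule~6 is immediate since $v$ is excluded from any solution before and after, and for Rule~7 the paper does the same swap $X' = (X \cup \{v\}) \setminus (N(v)\cap F)$, observing that every cycle through a degree-$2$ neighbor of $v$ also passes through $v$. That is all the claim asserts: in the paper ``safe'' means only equivalence of instances, and the measure bound for Rule~6 is a separate claim proved later via Observations~\ref{obs:nice-W}--\ref{obs:tent-W}.

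Your added measure argument, however, has a genuine gap in exactly the subcase you flag. Suppose $\deg_W(v)=0$, $\Sdeg(v)=0$, and $v$ has a $P$-tent neighbor $u$ with $\deg_W(u)<2$; say $u$ has one neighbor in $W$ and one $P$-nice neighbor $w$ besides $v$. After moving $v$ to $W$, the vertex $u$ is \emph{not} a tent (it still has the $F$-neighbor $w$), $w$ is still only $P$-nice, and no nice vertex or tent has been created; so $\rho$ rises by one with nothing to offset it, and $\mu$ strictly increases. Your proposed ``local accounting on $\deg_W(u)$ and $\Sdeg(u)$'' cannot repair this, because the single application of Rule~6 to $v$ simply does not produce the compensating nice vertex or tent in this configuration.

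The paper's resolution (Observation~\ref{obs:tent-W} and the subsequent claim) is that the measure non-increase holds only \emph{after a cascade of further reduction rules}: once $v$ is in $W$, Rule~7 applied at $u$ pushes each remaining $P$-nice neighbor $w$ into $R$, then Rule~6 pushes $w$ into $W$ without creating a new component (since $w$ already had a $W$-neighbor), and only at the end of this chain does $u$ become a genuine tent (or Rule~5 fires on $u$ and drops $k$). So the correct statement is not that a single application of Rule~6 keeps $\mu$ non-increasing, but that it does so together with the ensuing exhaustive reductions; this is precisely why the paper isolates safeness from the measure analysis and develops the observations first.
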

\begin{proof}
The safeness of Reduction Rule 6 is straightforward.
For the safeness of Reduction Rule $7$, suppose that $(G,W,R,k)$ is an input instance.
Let $v$ be the vertex satisfying the condition of Reduction Rule $7$ and $(G,W,R\cup (N(v)\cap F),k)$ be the instance obtained after applying Reduction Rule $7$. We claim that $(G,W,R,k)$ is a YES instance if and only if $(G,W,R\cup(N(v) \cap F),k)$ is a YES instance.
The ``if'' direction is straightforward, since we only increased the set $R$.

For the ``only if'' direction, let $X$ be a solution of size at most $k$ to the instance $(G,W,R,k)$.
If $X\cap N(v)=\emptyset$, then $X$ is also a solution to $(G,W,R\cup(N(v)\cap F),k)$. Otherwise, we construct a vertex set $X'=(X\cup \{v\})\setminus (N(v)\cap F)$. Obviously $|X'|\leq k$. We will show that $X'$ is a solution to $(G,W,R\cup (N(v)\cap F),k)$. Clearly, it is disjoint with $W \cup R \cup N((v) \cap F)$ and independent, as it is disjoint with $N(v)$.
To show that $X'$ is a feedback vertex set in $G$,
observe that since every vertex $w \in N(v) \setminus (W \cup R)$ is of degree $2$,
  every cycle passing through $w$ in $G$ passes also through $v$.
 \end{proof}

Since Reduction Rule 7 only moves vertices to $R$, its application does not change
the measure; note that the neighbors of a vertex affected by Reduction Rule 7 can be neither a nice vertex nor a tent.
However, the situation is not that easy for Reduction Rule 6,
and we need to show that its application does not increase $\mu$.
To this end, we show a number of generic observations on how the measure $\mu$ changes
if we modify a neighbor of a P-nice vertex or a P-tent; we refer to Figure~\ref{Obs1-3}
for an illustration.

\begin{observation}\label{obs:nice-W}
Let $v \in F$ be a vertex with a P-nice neighbor $w$.
Consider the operation of moving $v$ to $W$.
Then, the vertex $w$ becomes nice and $\eta$ goes up at least by one.
\end{observation}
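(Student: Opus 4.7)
The plan is to unpack the definition of P-nice applied to $w$, identify which neighbor of $w$ plays the role of $v$, verify that after the move $w$ satisfies the definition of a nice vertex, and finally confirm that no currently nice vertex is destroyed, so the net change in $\eta$ is at least $+1$.

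First, by the definition of P-nice, $w \in F \setminus R$, $\deg(w) = 2$, and exactly one of $w$'s two incident edges has its other endpoint in $W$; the other endpoint lies in $F$. Since $v$ is a neighbor of $w$ and $v \in F$, the edge $vw$ must be the unique $F$-edge of $w$. After moving $v$ to $W$, the edges incident to $w$ are unchanged, and $w$ remains in $F \setminus R$, since the operation only alters the partition class of $v$. Now both neighbors of $w$ lie in $W$, yielding $\deg_W(w) = 2$ and no $F$-neighbors; this matches the definition of a nice vertex exactly, so $w$ becomes nice and contributes $+1$ to $\eta$.

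The main (and only) obstacle is ruling out that the same operation destroys some previously nice vertex, which would offset the gain. To handle this, I would observe that the only vertices whose $W$-degree or $F$-neighborhood can change are $v$ itself and the $F$-neighbors of $v$. The vertex $v$ cannot have been nice before, since it has $w \in F \setminus R$ as a neighbor and so violates the ``no $F$-neighbors'' clause of niceness. Similarly, every $F$-neighbor $u$ of $v$ had $v$ as an $F$-neighbor before the move and hence was not nice either. Therefore no previously nice vertex loses its status, and the promotion of $w$ produces the required net increase of at least one in $\eta$.
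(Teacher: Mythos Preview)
Your proof is correct. The paper actually states this observation without proof, treating it as immediate from the definitions; your argument spells out exactly the two points that make it work (that $w$ gains nice status, and that neither $v$ nor any $F$-neighbor of $v$ could have been nice beforehand), so it matches the intended reasoning.
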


\begin{observation}\label{obs:tent-X}
Let $v \in F$ be a vertex with a P-tent neighbor $w$ such that $v$ is not P-nice.
Consider the operation of putting $v$ in a solution: deleting it from $G$
and putting $N(v) \cap F$ into $R$.
Then the application of reduction rules on $w$ and its (possible) other neighbors in $F$
decreases $\mu$ by at least one.
\end{observation}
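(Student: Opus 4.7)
The plan is to unpack the local structure around $w$, then carefully account for the contributions to $\mu = k + \rho - (\eta + \tau)$ that arise once the operation is performed and Reduction Rule~6 triggers on $w$.

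First, because $w$ is a P-tent, $\deg(w) = 3$ and $\Gdeg(w) = \Sdeg(w) + \deg_W(w) = 2$. Writing $s := \Sdeg(w) = 2 - \deg_W(w)$, the three incidences of $w$ split as $\deg_W(w)$ edges into $W$, $s$ edges to P-nice neighbors $u_1, \ldots, u_s$, and exactly one further edge to a non-P-nice vertex of $F$. Since $v$ is by hypothesis a non-P-nice $F$-neighbor of $w$, it must be that unique vertex; moreover, each $u_i$, being P-nice, has $w$ as its only $F$-neighbor, so no $u_i$ is adjacent to $v$.

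Next I would trace the operation. Removing $v$ gives $\Delta k = -1$ and places $N(v)\cap F$, which contains $w$ but none of the $u_i$, into $R$. Now $w \in R$ has $\Gdeg(w) = 2$, so Reduction Rule~6 pulls $w$ into $W$. This move promotes every $u_i$ from P-nice to nice, because both of $u_i$'s edges now land in $W$ and $u_i$ still has no $F$-neighbors. Crucially, no preexisting nice vertex or tent can be broken: such vertices have no $F$-neighbors and are therefore adjacent neither to $v$ nor to $w$, so their defining conditions are untouched. Hence $\Delta \eta \geq s$ and $\Delta \tau \geq 0$. On the $W$-side, Reduction Rule~5 (already applied to $w$) guarantees that the $\deg_W(w)$ edges out of $w$ reach $\deg_W(w)$ distinct components of $G[W]$, so absorbing $w$ merges them into a single component, contributing $\Delta \rho = 1 - \deg_W(w)$.

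Summing these changes yields $\Delta \mu \leq -1 + (1 - \deg_W(w)) - (2 - \deg_W(w)) = -2$, comfortably below the required $-1$. The subtlest point I expect is verifying the invariance of existing nice vertices and tents under the whole operation together with any follow-up reductions; once that is in hand, further reductions that might still fire (for instance Reduction Rule~5 on some $u_i$ whose sole $W$-neighbor happens to lie in one of the merged components) can only decrease $k$ further and hence improve the bound rather than threaten it.
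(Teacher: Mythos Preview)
Your argument has a genuine gap in the mechanism, even though the final arithmetic happens to come out right.

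The reduction rules are applied in increasing order of index. After deleting $v$, the vertex $w$ has degree~$2$ and lies in $R$, while each $u_i$ has degree~$2$, lies in $F\setminus R$, and is not nice (it still has the $F$-neighbor $w$). Whenever $s\geq 1$, the pair $(w,u_i)$ therefore triggers Reduction Rule~2 \emph{before} Rule~6 is ever considered, so your sentence ``Reduction Rule~6 pulls $w$ into $W$'' does not describe what the algorithm actually does. The paper's proof follows the prescribed order: Rule~2 repeatedly contracts $\{w,u_1,\dots,u_s\}$ to a single degree-$2$ vertex $w'$ with both edges in $W$; then either Rules~4/5 fire (same component), or Rule~6 merges two components ($w'\in R$, which happens only when $s=0$), or $w'$ becomes a new nice vertex ($w'\notin R$). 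Each branch yields a drop of at least one in $\mu$.

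There is also a scope issue. The observation (as the paper's remark right after Observation~3 makes explicit) counts only the measure drop caused by the reductions \emph{after} the operation; the $\Delta k=-1$ from deleting $v$ is accounted for separately in the branching analysis. Your total of $-2$ includes that $-1$, so the reductions alone give exactly $-1$, not something ``comfortably below'' it.

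Your numbers coincide with the paper's because the Rule-2 contraction followed by becoming nice / moving to $W$ produces the same net change in $(\rho,\eta)$ as your hypothetical Rule-6-first path. If you want to keep your presentation, you would need to argue this confluence explicitly rather than assert that Rule~6 fires.
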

\begin{proof}
The operation moves $w$ to $R$ and decreases its degree to $2$.
Since $w$ is a P-tent and $v$ is not a P-nice vertex, every neighbor $u \in (N(w) \cap F)\setminus \{v\}$ is a P-nice vertex.
Consequently, Reduction Rule 2 reduces $(N[w] \cap F) \setminus \{v\}$ to a single vertex $w'$,
  which is in $R$ if $(N(w) \cap F) \setminus \{v\} \subseteq R$.
  Furthermore, $\deg(w') = \deg_W(w') = 2$.
We remark here that the above discussion includes the case when $N(v) \cap F = \{v\}$,
   that is, all other neighbors of $v$ are already in $W$.

If $w'$ has both neighbors in the same connected component of $G[W]$, then
either Reduction Rule 4 rejects the instance or Reduction Rule 5 decreases $k$ by one.
Otherwise, if $w' \in R$, Reduction Rule 6 moves $w'$ to $W$, decreasing $\rho$ by one.
If $w' \notin R$, then $w'$ becomes a nice vertex, increasing $\eta$ by one.
Thus, in all cases, $\mu$ decreases by at least one.
 \end{proof}

\begin{observation}\label{obs:tent-W}
Let $v \in F$ be a vertex with a P-tent neighbor $w$ such that $v$ is not P-nice.
Consider the operation of moving $v$ into $W$.
Then the application of reduction rules on $w$ and its (possible) other neighbors in $F$
decrease $\mu$ by at least one.
\end{observation}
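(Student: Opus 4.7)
The plan is to mirror the structure of the proof of Observation~\ref{obs:tent-X}, performing a case analysis on the original value of $\deg_W(w)\in\{0,1,2\}$. The key structural observation is that, since $w$ is a P-tent with $\Gdeg(w)=\Sdeg(w)+\deg_W(w)=2$ and $\deg(w)=3$, the vertex $w$ has exactly $3-\deg_W(w)$ neighbors in $F$, of which $2-\deg_W(w)$ are P-nice and the remaining one is non-P-nice. Since $v\in F$ is a non-P-nice neighbor of $w$, it must be this unique non-P-nice F-neighbor, so every other F-neighbor of $w$ is P-nice.

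After moving $v$ to $W$, the edge $vw$ is reclassified as a $W$-edge of $w$, so $\deg_W(w)$ goes up by one. If originally $\deg_W(w)=2$, then after the move $w$ has $\deg_W=3$ and no F-neighbors, so either Reduction Rule~5 fires on $w$ (decreasing $k$ by one) or $w$ becomes a tent (increasing $\tau$ by one); either way $\mu$ drops by at least one.

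In the remaining cases, $w$ still has one or two P-nice F-neighbors left after the move. The next step is to apply Reduction Rule~7 to $w$, pushing those P-nice neighbors into $R$ without changing $\mu$. Each such vertex now lies in $R$ with $\Gdeg=1$ (a single $W$-neighbor and no P-nice neighbor), so Reduction Rule~6 moves it into $W$, merging it with the component of its unique $W$-neighbor and leaving $\rho$ unchanged. At each stage of this cascade one must check whether $w$ now has two $W$-edges in the same component of $G[W]$: if yes, Reduction Rule~5 fires, decreasing $k$ by one and hence $\mu$ by one. Otherwise the cascade proceeds until all original P-nice neighbors of $w$ reside in $W$, at which point $w$ has $\deg_W=3$ and no F-neighbors, so either a final Rule~5 fires (dropping $k$) or $w$ becomes a tent (raising $\tau$), in each case decreasing $\mu$ by one.

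The main obstacle will be careful bookkeeping of the reduction-rule order during the cascade and of the corresponding changes to the ingredients $k$, $\rho$, $\eta$, $\tau$ of $\mu$. In particular, I would verify that no lower-numbered rule interferes unexpectedly; this reduces to the observation that every vertex pushed into $R$ during the cascade has degree exactly $2$ with precisely one $W$-edge, so Reduction Rule~4 cannot fire on it, and the described order of Rules~5 through~7 is indeed the one followed. It is equally important that moving a P-nice neighbor with a single $W$-neighbor into $W$ leaves $\rho$ unchanged, which is exactly what prevents the net decrease of one from being absorbed by a compensating $+1$ to $\rho$.
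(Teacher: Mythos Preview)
Your proposal is correct and follows essentially the same approach as the paper: identify that all remaining $F$-neighbors of $w$ are P-nice, push them into $R$ via Reduction Rule~7, then into $W$ via Reduction Rule~6 without increasing $\rho$, and conclude that either Rule~5 fires along the way (dropping $k$) or $w$ ends up a tent (raising $\tau$). Your explicit case split on the original value of $\deg_W(w)$ is slightly more detailed than the paper's presentation, but the argument is the same.
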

\begin{proof}
Since $w$ is a P-tent and $v$ is not P-nice, every neighbor $u \in (N(w) \cap F) \setminus \{v\}$ is P-nice.
Consider such a vertex $u$; note that $u \in F \setminus R$ by the definition of P-nice.
Reduction Rule 7 is applicable to $w$; this rule would move $u$ to $R$
and then Reduction Rule 6 would move $u$ to $W$ and, since $u$ is P-nice,
this would not create a new connected component of $G[W]$.
Along this process, Reduction Rule 5 can be triggered on $w$, deleting $w$
and decreasing $k$ by one.

If this does not happen, in the end of this process, we have $\deg(w) = \deg_W(w) = 3$;
note that we are already in this situation if $N(w) \cap F = \{v\}$ in the beginning. 
Since $w$ is a P-tent at the beginning, $w \notin R$ in the end of the process.
Then, $w$ becomes a tent and increases $\tau$ by one.
Thus, in all cases, $\mu$ decreases by at least one.  \end{proof}

We remark here that Observations~\ref{obs:tent-X} and~\ref{obs:tent-W} 
treat measure drop \emph{after} the respective operation on $v$ is applied;
it does not count how the operation on $v$ itself affects the measure.

\begin{figure}[htb]
\begin{center}
\begin{tikzpicture}[scale=0.5]

    \draw(0,4) ellipse (4cm and 2cm);
    \draw (0,0) ellipse (4cm and 1cm);
    \node at (3.5,4) {\Large $F$ };
    \node (v6) at (0,-0.5){\Large $W$};
     \node (v7) at (-1,6.2) {};
     \node (v8) at (-1,1.8) {};
     \draw[bend left=20] (v7) to (v8);
     \node at (-3.5,4) {\Large R};
     \node[label=above:{$v$},style={fill=black,circle,draw,inner sep=1pt}] (v) at (1.5,5) {};
     \node[label=right:{$w$},style={fill=black,circle,draw,inner sep=1pt}] (w1) at (2,4.2) {};
     \node[style={fill=black,circle,draw,inner sep=1pt}] (w11) at (1.8,3) {};
     \node[style={fill=black,circle,draw,inner sep=1pt}] (w12) at (2.5,3.5) {};
     \node[style={fill=black,circle,draw,inner sep=1pt}] (u11) at (1.5,0) {};
     \node[style={fill=black,circle,draw,inner sep=1pt}] (u12) at (2.5,0) {};
     \node[label=right:{$w_{2}$},style={fill=black,circle,draw,inner sep=1pt}] (w2) at (0,4) {};
     \node[style={fill=black,circle,draw,inner sep=1pt}] (u21) at (-1,0) {};
      \draw (v) to (w1);
      \draw (w1) to (w11);
      \draw (w1) to (w12);
      \draw (w11) to (u11);
      \draw (w12) to (u12);
      \draw (v) to (w2);
      \draw (w2) to (u21);

\begin{scope}[shift={(12cm,5cm)}]
    \draw(0,4) ellipse (4cm and 2cm);
    \draw (0,0) ellipse (4cm and 1cm);
    \node at (3.5,4) {\Large $F$ };
    \node (v6) at (0,-0.5){\Large $W$};
     \node (v7) at (-1,6.2) {};
     \node (v8) at (-1,1.8) {};
     \draw[bend left=20] (v7) to (v8);
     \node at (-3.5,4) {\Large R};

     \node[label=right:{$w$},style={fill=black,circle,draw,inner sep=1pt}] (w1) at (-2,5) {};
     \node[style={fill=black,circle,draw,inner sep=1pt}] (w11) at (1.8,3) {};
     \node[style={fill=black,circle,draw,inner sep=1pt}] (w12) at (2.5,3.5) {};
     \node[style={fill=black,circle,draw,inner sep=1pt}] (u11) at (1.5,0) {};
     \node[style={fill=black,circle,draw,inner sep=1pt}] (u12) at (2.5,0) {};
     \node[label=right:{$w_{2}$},style={fill=black,circle,draw,inner sep=1pt}] (w2) at (-2,3.5) {};
     \node[style={fill=black,circle,draw,inner sep=1pt}] (u21) at (-1,0) {};

      \draw (w1) to (w11);
      \draw (w1) to (w12);
      \draw (w11) to (u11);
      \draw (w12) to (u12);

      \draw (w2) to (u21);
\end{scope}

\begin{scope}[shift={(24cm,5cm)}]
    \draw(0,4) ellipse (4cm and 2cm);
    \draw (0,0) ellipse (4cm and 1cm);
    \node at (3.5,4) {\Large $F$ };
    \node (v6) at (0,-0.5){\Large $W$};
     \node (v7) at (-1,6.2) {};
     \node (v8) at (-1,1.8) {};
     \draw[bend left=20] (v7) to (v8);
      \node at (-3.5,4) {\Large R};
     \node[style={fill=black,circle,draw,inner sep=1pt}] (u11) at (1.5,0) {};
     \node[style={fill=black,circle,draw,inner sep=1pt}] (u12) at (2.5,0) {};
     \node[style={fill=black,circle,draw,inner sep=1pt}] (u21) at (-1,0) {};
     \node[style={fill=black,circle,draw,inner sep=1pt}] (w12) at (2.5,3.5) {};
     \node[label=left:{$w_{2}$},style={fill=black,circle,draw,inner sep=1pt}] (w2) at (-2,0) {};
     \draw (w12) to (u11);
     \draw (w12) to (u12);
     \draw (w2) to (u21);
     \draw [->,red](2.5,4.3) to (2.5,3.8);
     \node (n1) at (2.5,4.7) {nice};
\end{scope}

\begin{scope}[shift={(12cm,-5cm)}]
    \draw(0,4) ellipse (4cm and 2cm);
    \draw (0,0) ellipse (4cm and 1cm);
    \node at (3.5,4) {\Large $F$ };
    \node (v6) at (0,-0.5){\Large $W$};
     \node (v7) at (-1,6.2) {};
     \node (v8) at (-1,1.8) {};
     \draw[bend left=20] (v7) to (v8);
      \node at (-3.5,4) {\Large R};
     \node[label=right:{$v$},style={fill=black,circle,draw,inner sep=1pt}] (v) at (0,0.5) {};
     \node[label=right:{$w$},style={fill=black,circle,draw,inner sep=1pt}] (w1) at (2,4.2) {};
     \node[style={fill=black,circle,draw,inner sep=1pt}] (w11) at (1.8,3) {};
     \node[style={fill=black,circle,draw,inner sep=1pt}] (w12) at (2.5,3.5) {};
     \node[style={fill=black,circle,draw,inner sep=1pt}] (u11) at (1.5,0) {};
     \node[style={fill=black,circle,draw,inner sep=1pt}] (u12) at (2.5,0) {};
     \node[label=right:{$w_{2}$},style={fill=black,circle,draw,inner sep=1pt}] (w2) at (0,4) {};
     \node[style={fill=black,circle,draw,inner sep=1pt}] (u21) at (-1,0) {};
      \draw (v) to (w1);
      \draw (w1) to (w11);
      \draw (w1) to (w12);
      \draw (w11) to (u11);
      \draw (w12) to (u12);
      \draw (v) to (w2);
      \draw (w2) to (u21);

\end{scope}

\begin{scope}[shift={(24cm,-5cm)}]
    \draw(0,4) ellipse (4cm and 2cm);
    \draw (0,0) ellipse (4cm and 1cm);
    \node at (3.5,4) {\Large $F$ };
    \node (v6) at (0,-0.5){\Large $W$};
     \node (v7) at (-1,6.2) {};
     \node (v8) at (-1,1.8) {};
     \draw[bend left=20] (v7) to (v8);
      \node at (-3.5,4) {\Large R};
     \node[label=right:{$v$},style={fill=black,circle,draw,inner sep=1pt}] (v) at (0,0.5) {};
     \node[label=right:{$w$},style={fill=black,circle,draw,inner sep=1pt}] (w1) at (2,4.2) {};
     \node[style={fill=black,circle,draw,inner sep=1pt}] (w11) at (1.3,0.5) {};
     \node[style={fill=black,circle,draw,inner sep=1pt}] (w12) at (2.3,0.5) {};
     \node[style={fill=black,circle,draw,inner sep=1pt}] (u11) at (1.5,0) {};
     \node[style={fill=black,circle,draw,inner sep=1pt}] (u12) at (2.5,0) {};
     \node[label=right:{$w_{2}$},style={fill=black,circle,draw,inner sep=1pt}] (w2) at (0,4) {};
     \node[style={fill=black,circle,draw,inner sep=1pt}] (u21) at (-1,0) {};
      \draw (v) to (w1);
      \draw (w1) to (w11);
      \draw (w1) to (w12);
      \draw (w11) to (u11);
      \draw (w12) to (u12);
      \draw (v) to (w2);
      \draw (w2) to (u21);
      \draw [->,red](2,4.9) to (2,4.4);
      \node (t1) at (2,5.2) {tent};
      \draw [->,red](0,4.7) to (0,4.2);
      \node (n1) at (0,5.0) {nice};
\end{scope}

\node (s1) at (4,6) {};
\node (t1) at (7.5,8) {};
\tikzset{>=triangle 90}
\draw [->,line width=0.4mm] (s1)--  node[rotate=30,xshift=0mm, yshift=3mm] {delete $v$} (t1);

\node (s2) at (3.5,-1.6) {};
\node (t2) at (8,-4.3) {};
\tikzset{>=triangle 90}
\draw [->,line width=0.4mm] (s2)--  node[rotate=-34,xshift=-1mm, yshift=3mm] {put $v$ into $W$} (t2);

\node (s3) at (16,8) {};
\node (t3) at (20,8) {};
\tikzset{>=triangle 90}
\draw [->,line width=0.4mm] (s3)--  node[xshift=0mm, yshift=4mm] {Reduction $2$}node[xshift=0mm, yshift=-4mm] {Reduction $6$} (t3);

\node (s4) at (16,-3) {};
\node (t4) at (20,-3) {};
\tikzset{>=triangle 90}
\draw [->,line width=0.4mm] (s4)--  node[xshift=0mm, yshift=4mm] {Reduction $7$}node[xshift=0mm, yshift=-4mm] {Reduction $6$} (t4);
\end{tikzpicture}

\end{center}
\caption{Observation $1$-$3$} \label{Obs1-3}
\end{figure}
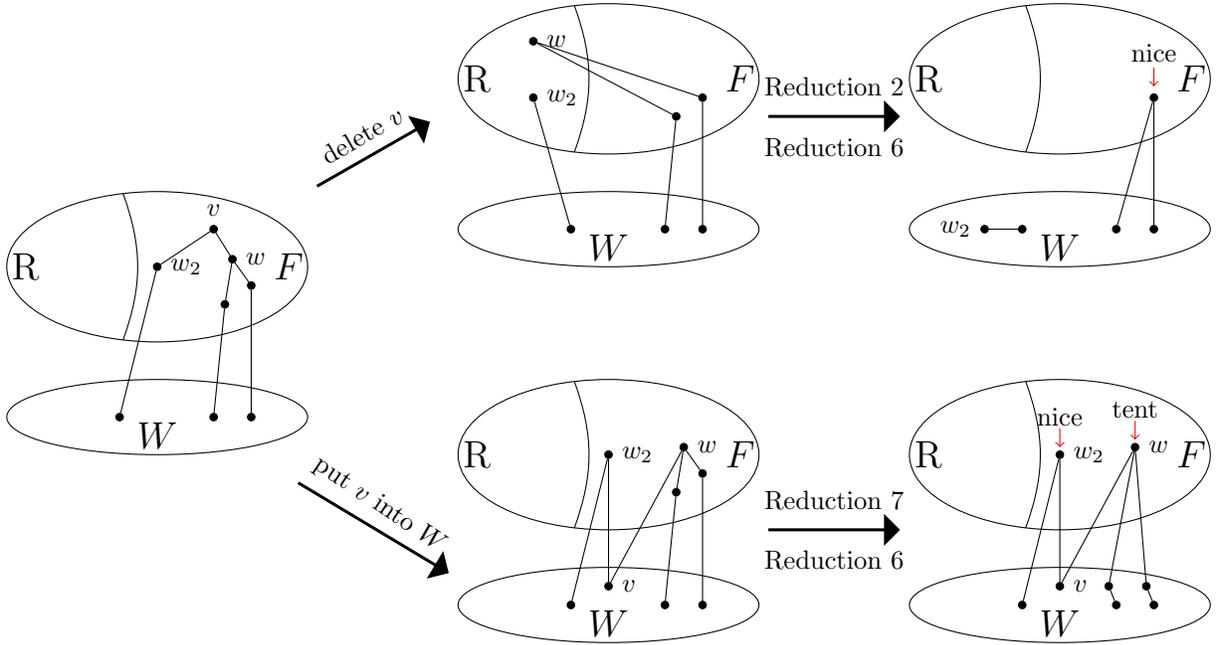
Armed with the above observations (see Fig.~\ref{Obs1-3}), we can now show that Reduction Rule 6 on its own
does not increase the measure.
\begin{claim}
An application of Reduction Rule 6 does not increase the measure.
\end{claim}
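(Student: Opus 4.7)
The plan is to observe first that Reduction Rule~6 leaves $k$ unchanged, reducing the task to bounding $\Delta\rho - \Delta\eta - \Delta\tau$ from above by zero, where each $\Delta$ denotes the net change caused by applying the rule together with the reductions it triggers. I would set up three preliminary facts that make the case analysis short. First, $v \in R$ forces $v$ to not be P-nice, since P-nice vertices live in $F\setminus R$ by definition, so the hypotheses of Observations~\ref{obs:nice-W} and~\ref{obs:tent-W} are automatically satisfied when they are invoked. Second, because Reduction Rule~4 has priority over Rule~6, when Rule~6 fires the $W$-neighbors of $v$ lie in pairwise distinct components of $G[W]$, giving $\Delta\rho = 1 - \deg_W(v)$. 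Third, no pre-existing nice vertex or tent can be adjacent to $v$: such vertices have no neighbors in $F$ by definition, and $v \in R \subseteq F$. Consequently the move itself can only create new nice vertices or tents, so $\Delta\eta,\Delta\tau \geq 0$ as a direct consequence of the move.

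With these in hand, I would split on $\deg_W(v)$. If $\deg_W(v) \geq 1$, then $\Delta\rho \leq 0$ and the preliminary bounds on $\Delta\eta,\Delta\tau$ already close the argument. Otherwise $\deg_W(v) = 0$, $\Delta\rho = +1$, and the triggering condition of Rule~6 collapses to $\Sdeg(v) \geq 1$ or $\Tdeg(v) \geq 1$. In the first sub-case, $v$ has some P-nice neighbor, and Observation~\ref{obs:nice-W} immediately yields $\Delta\eta \geq 1$ from the move alone, cancelling the $+1$ in $\rho$. In the second sub-case, $\Sdeg(v) = 0$ and $\Tdeg(v) \geq 1$, so $v$ has a P-tent neighbor and Observation~\ref{obs:tent-W} applies; it supplies the missing $-1$ through the follow-up reductions on the P-tent and its P-nice $F$-neighbors.

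The main subtlety lies in this last sub-case. Observation~\ref{obs:tent-W} is stated as a measure drop \emph{credited to subsequent reductions}, separately from the direct effect of the move, so I would be careful to verify that the direct effect contributes at most $+1$ to $\mu$ — which follows from $\Delta\rho \leq 1$ uniformly together with $\Delta\eta,\Delta\tau \geq 0$ — and then combine with the $-1$ from the observation to obtain $\Delta\mu \leq 0$. The other two sub-cases are routine bookkeeping once the preliminary facts of the first paragraph are in place.
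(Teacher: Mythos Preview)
Your proposal is correct and follows essentially the same route as the paper: split on $\deg_W(v)$, handle $\deg_W(v)\geq 1$ directly via $\Delta\rho\leq 0$, and in the case $\deg_W(v)=0$ appeal to Observation~\ref{obs:nice-W} for a P-nice neighbor or Observation~\ref{obs:tent-W} for a P-tent neighbor to offset the $+1$ in $\rho$. Your version is slightly more explicit in a few places (invoking the priority of Reduction Rule~4 to pin down $\Delta\rho = 1-\deg_W(v)$, and separating the direct effect of the move from the credited follow-up reductions in the P-tent sub-case), but the argument is the same.
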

\begin{proof}
Since $v \in R$, $v$ is neither a tent nor a nice vertex.
Thus, moving $v$ to $W$ does not decrease $\eta$ nor $\tau$.

If $\deg_{W}(v)\geq 1$, $\rho$ does not increase by moving $v$ to $W$.
Hence, $\mu$ does not increase.

We are left with the case $\deg_{W}(v)=0$, and then moving $v$ to $W$ increases $\rho$ by one.
If $\Gdeg(v) \geq 1$ but $\deg_W(v)=0$, we have a P-nice neighbor $w$ of $v$.
Then, after $v$ is moved to $W$, Observation~\ref{obs:nice-W}
asserts that future application of reduction rules on $w$
cause a measure decrease of at least one, offsetting the increase of $\rho$.
Otherwise, $\Tdeg(v) \geq 1$, and we have a neighbor $w$ of $v$ that is a P-tent.
Then, after $v$ is moved to $W$, Observation~\ref{obs:tent-W}
asserts that future application of reduction rules on $w$ and its possible neighbors
in $F$ cause measure decrease of at least one. This finishes the proof.
 \end{proof}

%
%

\subsection{Branching for \textsc{DIS-IFVS}}\label{sec:Branching}

Now we are ready to introduce the branching algorithm.
We assume that all reduction rules have been applied exhaustively.
As a branching pivot, we pick a vertex $v \in F$ that is neither a nice vertex, nor a tent, nor a $P$-nice vertex,
and satisfies one of the following three cases:

\begin{description}
\item[Case A:] $\Gdeg(v)\geq 3$.
\item[Case B:] $\Gdeg(v)\geq 1$ and $\Tdeg(v)\geq 1$.
\item[Case C:] $\Tdeg(v)\geq 2$.
\end{description}

In case of more than one vertices of $F$ satisfying one of the above cases,
 we prefer to pick a vertex $v$ that satisfies an earlier case.
Within one case, we break ties arbitrarily.

First, note that the non-applicability of Reduction Rule 6 implies that the chosen
branching pivot $v$ does not lie in $R$.

No matter which case the chosen branching pivot $v$ satisfies, we branch into two cases.
In one case we include $v$ into the solution: we delete $v$ from the graph, include $N(v) \cap F$
into $R$, and decrease $k$ by one.
In the other case, we move $v$ to $W$.

We now show that in each of the cases, the branching gives a branching vector $(1,2)$ or better
with respect to the measure $\mu$. That is, in one of the branches the measure drops by at
least one, and in the other by at least two.

\medskip

\noindent \textbf{Case A:} $\Gdeg(v)\geq 3$.
\begin{enumerate}[(i)]
\item Branch where $v$ is deleted and all vertices in $N(v)\cap F$ are added to $R$.
$k$ decreases by $1$, $\rho$ stays the same, and $\eta$ and $\rho$ does not decrease as $v$ is neither a nice vertex nor a tent. Thus, $\mu$ decreases by at least one.
\item Branch where $v$ is moved from $F$ to $W$.
$\rho$ decreases by $\deg_{W}(v)-1$ (which may be $-1$ if $\deg_W(v)=0$)
and $\eta$ increases by $\Sdeg(v)$. Since $\Gdeg(v)=\deg_{W}(v)+\Sdeg(v)\geq 3$ and $\tau$ does not decrease,
$\mu$ decreases by at least two.
\end{enumerate}

\noindent \textbf{Case B:} $\Gdeg(v)\geq 1$ and $\Tdeg(v)\geq 1$.
\begin{enumerate}[(i)]
\item Branch where $v$ is deleted and all vertices in $N(v)\cap F$ are added to $R$.
First, $k$ decreases by one. Furthermore, $v$ has a P-tent neighbor $w$ and
Observation~\ref{obs:tent-X} asserts that future applications
of reduction rules on $w$ and its remaining neighbors in $F$ decrease the measure
by at least one. Thus, in total $\mu$ decreases by at least two.
\item Branch where $v$ is moved from $F$ to $W$.
For every P-tent neighbor $w$ of $v$, Observation~\ref{obs:tent-W} asserts that the application of reduction rules to $w$
and its remaining neighbors in $F$ cause a measure decrease of at least 1.
If $\deg_W(v) \geq 1$, then moving $v$ to $W$ does not increase $\rho$, and we are done.
Otherwise, if $\deg_W(v)=0$, moving $v$ to $W$ increases $\rho$ by $1$ but the assumption $\Gdeg(v) \geq 1$ implies that there also exists a P-nice neighbor $w$ of $v$.
For every such P-nice neighbor $w$ of $v$, Observation~\ref{obs:nice-W} asserts
that the future application of reduction rules on $w$ and its remaining neighbors in $F$
cause measure drop by at least $1$. Consequently, in this case we also have a measure drop of at least $1$.
\end{enumerate}

\noindent \textbf{Case C:} $\Tdeg(v)\geq 2$.
\begin{enumerate}[(i)]
\item Branch where $v$ is deleted and all vertices in $N(v)\cap F$ are added to $R$.
First, $k$ decreases by one.
Furthermore, for every P-tent neighbor $w$ of $v$, Observation~\ref{obs:tent-X}
asserts that the application of reduction rules on $w$ and its remaining neighbors in $F$
cause measure drop by at least one. Since $\Tdeg(v) \geq 2$, together with the decrease of $k$
we have a total measure decrease of at least $3$.
\item Branch where $v$ is moved from $F$ to $W$.
The move itself may increase $\rho$ by one.
For every P-tent neighbor $w$ of $v$, Observation~\ref{obs:tent-W} asserts
that the future application of reduction rules on $w$ and its remaining neighbors in $F$
cause measure drop by at least $1$.
Since $\Tdeg(v) \geq 2$, in total we have a measure decrease by at least $1$.
\end{enumerate}

We are left with analyzing what happens if no vertex of $F$ satisfies any of the three
cases for the choice of the branching pivot.
As in~\cite{ipec16}, we rely on the following base case.
\begin{lemma}[\cite{ipec16}] \label{poly}
Let $(G,W,R,k)$ be an instance of \textsc{DIS-IFVS} where every vertex in $V(G)\setminus W$ is either a nice vertex or a tent. Then we can find an independent feedback vertex set $X\subseteq V(G)\setminus (W\cup R)$ for $G$ of the minimum size in polynomial time.
\end{lemma}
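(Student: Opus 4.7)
The plan is to reduce the base case to polynomial-time matroid matching on a contracted auxiliary multigraph. First, I would contract each connected component of $G[W]$ into a single ``base'' vertex, obtaining a bipartite multigraph $\tilde H$ on $B \sqcup F$, where $|B| = \rho$. After exhaustive application of Reduction Rules $4$ and $5$ together with the nice/tent hypothesis, every vertex of $F$ is adjacent only to $B$ in $\tilde H$, a nice vertex has exactly two edges to distinct bases, and a tent has exactly three edges to distinct bases. Since $F$-vertices have no $F$-neighbors, any $X \subseteq F$ is automatically independent in $G$, and since $G[W]$ is a forest, cycles of $G$ correspond bijectively to cycles of $\tilde H$. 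Hence $G \setminus X$ is a forest if and only if $\tilde H \setminus X$ is, and the task becomes: find a minimum $X \subseteq F \setminus R$ such that $\tilde H \setminus X$ is a forest.

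Equivalently, we seek a maximum $Y \subseteq F$ containing $R \cap F$ such that $\tilde H[B \cup Y]$ is a forest; each included nice vertex contributes one edge, and each included tent contributes a three-edge star (which is cycle-equivalent to any spanning tree on its three bases). I would recast this as matroid matching on the graphic matroid: for each tent $t$ on bases $\{u_1, u_2, u_3\}$ introduce a pair of virtual edges $(u_1 u_2, u_1 u_3)$ which must be included all-or-nothing; for each nice vertex introduce a pair consisting of its real edge together with a dummy edge to a fresh isolated vertex (always trivially matroid-independent). The problem becomes: find the maximum number of pairs to include such that the $R$-forced pairs are included and the union of included edges is a forest in the resulting multigraph. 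Matroid matching on a graphic matroid is solvable in polynomial time by Lov\'asz's algorithm, and the forced-pair extension is straightforward.

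The main obstacle is that the family $\{Y \subseteq F : \tilde H[B \cup Y] \text{ is a forest}\}$ is not itself a matroid on $F$, so a naive greedy on $F$ fails. This is witnessed by the instance with three bases $u_1, u_2, u_3$, a tent $t$ on $\{u_1, u_2, u_3\}$, and two nice vertices $n_{12}, n_{23}$: both $\{t\}$ and $\{n_{12}, n_{23}\}$ induce forests, yet neither $\{t, n_{12}\}$ nor $\{t, n_{23}\}$ does. The matroid matching reduction captures the all-or-nothing structure of each tent and thus delivers the polynomial-time algorithm promised in the lemma.
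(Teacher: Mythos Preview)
Your reduction to graphic matroid parity is correct and is precisely the ``observation by Cao et al.''~\cite{CaoC015} that the paper invokes; the paper does not give its own proof of this lemma but defers to~\cite{ipec16} and~\cite{CaoC015}, where the base case of \textsc{Disjoint FVS} with only degree-$2$ and degree-$3$ vertices into $W$ is solved by exactly this graphic matroid matching argument. Contracting the $W$-components, replacing each nice vertex by a single edge (padded with a dummy to a fresh vertex), and replacing each tent by a pair of edges spanning its three bases is the standard encoding, and Lov\'asz's algorithm then yields the maximum retainable set $Y$ in polynomial time.

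One simplification you can make: by definition nice vertices and tents lie in $F\setminus R$, so the hypothesis that every vertex of $V(G)\setminus W$ is nice or a tent forces $R=\emptyset$. Your treatment of $R$-forced pairs is therefore unnecessary here (though harmless). Likewise, you do not actually need to invoke Reduction Rules~4 and~5 up front: if some nice vertex or tent has two incident edges into the same $W$-component, the corresponding matroid-matching pair contains a loop and is automatically unselectable, so the encoding already forces that vertex into $X$. Finally, the phrase ``cycles of $G$ correspond bijectively to cycles of $\tilde H$'' is stronger than what you use or need; the relevant (and true) statement is simply that $G\setminus X$ is a forest iff $\tilde H\setminus X$ is, which follows because contracting a subtree never creates or destroys cycles.
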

Lemma \ref{poly} follows from the observation by Cao et al.~\cite{CaoC015} and the fact that all nice vertices and tents form an independent set.

We show the following.
\begin{lemma} \label{poly2}
If no reduction rule can be applied and every vertex of $F$ does not satisfy any
of the cases for the choice of the branching pivot, then the
remaining instance of \textsc{DIS-IFVS} can be solved in polynomial time.
\end{lemma}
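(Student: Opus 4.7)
My plan is to show that in the residual instance every vertex of $F = V(G) \setminus W$ is either a nice vertex or a tent; since nice and tent vertices are defined to lie in $F \setminus R$, this also forces $R = \emptyset$, and then Lemma~\ref{poly} immediately produces a polynomial-time algorithm. The entire argument is structural and exploits that $G[F]$ is a forest, because $W$ is a feedback vertex set of $G$; I would analyze each connected component $T$ of $G[F]$ separately.

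The easy case is a single-vertex component $T=\{v\}$: then $v$ has no F-neighbors, so all edges of $v$ go to $W$. The non-applicability of Reduction Rule~6 combined with Reduction Rule~1 rules out $v \in R$, so $v \in F \setminus R$ with $\deg_W(v) \geq 2$; the failure of Case~A then bounds $\deg_W(v) \leq 3$, so $v$ is nice or a tent. The main work is ruling out multi-vertex components, since no vertex of such a component can be nice or a tent (each has an F-neighbor). For such a hypothetical $T$, I would pick a leaf $v$: it has exactly one F-neighbor and at least one W-neighbor. A short case analysis on $(\deg(v), \deg_W(v))$, using the failure of Case~A, forces $v$ to be either P-nice (if $\deg(v) = 2$) or P-tent (if $\deg(v) = 3$ and $\deg_W(v) = 2$).

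For a P-nice leaf $v$ with F-neighbor $u$, we have $\Sdeg(u) \geq 1$ and hence $\Gdeg(u) \geq 1$; the vertex $u$ cannot be nice or a tent (it has an F-neighbor), nor P-nice (else Reduction Rule~2 would merge the adjacent degree-two pair $u,v$), so $u$ is not in a protected category, and failure of Cases~A and~B pins down $\Gdeg(u) \in \{1,2\}$ and $\Tdeg(u) = 0$. A careful examination of $u$'s remaining F-neighbors, combined with Reduction Rule~7 (which triggers when every F-neighbor of $u$ in $F \setminus R$ has degree two) and Reduction Rule~2, should force some reduction to remain applicable, contradicting residuality. The P-tent leaf case is symmetric: now $\Tdeg(u) \geq 1$, so failure of Case~B forces $\Gdeg(u) = 0$, and an analogous reduction-based argument yields a contradiction. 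The hard part is the careful propagation of the constraints ``$\Gdeg \leq 2$, $\Tdeg \leq 1$, not both $\geq 1$'' through the unprotected interior vertices of the tree, verifying that no consistent configuration survives all reductions; once this is established, every component of $G[F]$ is a single nice-or-tent vertex and Lemma~\ref{poly} concludes the proof.
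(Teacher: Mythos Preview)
Your overall strategy coincides with the paper's: show that every vertex of $F$ is nice or a tent and then invoke Lemma~\ref{poly}. The treatment of singleton components and the observation that every leaf of a multi-vertex component of $G[F]$ is P-nice or a P-tent are correct.

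The gap is in the contradiction for multi-vertex components. You pick an \emph{arbitrary} leaf $v$, analyze its $F$-neighbor $u$, and then defer to an unspecified ``careful propagation of the constraints \ldots\ through the unprotected interior vertices of the tree''. This is the entire content of the lemma, and you have not supplied it. With an arbitrary leaf, $u$ may have non-leaf $F$-neighbors that are neither P-nice nor P-tents (e.g.\ internal vertices of the tree, or vertices in $R$ with $\Gdeg = \Tdeg = 0$), and nothing you have written forces a reduction rule or a branching case to fire at $u$; the constraints you derive ($\Gdeg(u)\le 2$, $\Tdeg(u)\le 1$, not both positive) are perfectly compatible with $u$ having several such neighbors, and the argument does not terminate.

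The paper avoids unbounded propagation by a specific choice of leaf: it roots the tree, takes $v$ to be a leaf \emph{furthest from the root}, and among those one \emph{maximizing} $\deg_W(v)$. The first condition guarantees that every child of $u$ (the parent of $v$) is itself a leaf, hence P-nice or a P-tent; the second guarantees that if $v$ is P-nice then \emph{all} children of $u$ are P-nice. With this control, a bounded two-level case analysis (on $u$, and if necessary on the grandparent $x$) suffices: one shows that $u$ is forced to be a P-tent and then that $x$ satisfies Case~B or Case~C. This deepest-leaf-with-max-$\deg_W$ trick is the missing idea in your plan; once you adopt it, the vague ``propagation'' collapses to a finite case check.
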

\begin{proof}
We claim that every vertex in $F$ of the remaining graph $G$ is either a tent or a nice vertex;
the claim then follows by Lemma~\ref{poly}.

For contradiction, suppose that there is a connected component $D$
of $G[F]$ that is not a singleton with a tent or a nice vertex.
Since no vertex of $D$ falls into Case A, $\Gdeg(v) \leq 2$ for every $v \in D$.

Let $v \in D$ be any leaf, that is, a vertex in $F$ that has only exactly one neighbor in $F$.
If $\deg_W(v) = 0$, then $\deg(v) = 1$ and Reduction Rule~1 deletes $v$.
Thus, since $\Gdeg(v) \leq 2$, we have $\deg_W(v) \in \{1,2\}$.
In particular, every leaf of $D$ has at least one neighbor in $W$
and, since Reduction Rule~6 is not applicable to $v$, $v \notin R$.

Root the tree $G[D]$ at an arbitrary vertex, and consider a leaf $v \in D$
that is furthest from the root in $G[D]$ and, among such leaves, choose one
maximizing $\deg_W(v)$. Note that $v \notin R$ as otherwise Reduction Rule 6 would move $v$ to $W$.

First, assume $\deg_W(v) = 2$. Since $v$ is a leaf of $D$ and is not nice, $v$ has
exactly one neighbor $u \in D$, and $v$ is a P-tent.
Hence, $\Tdeg(u) \geq 1$.
If $\deg(u) \leq 1$, then Reduction Rule~1 applies to $u$.
If $\deg(u) = 2$, then Reduction Rule 7 applies to $v$ if $u \notin R$ and once
$u$ is in $R$, then Reduction Rule 6 applies to $u$, making $v$ a tent.
Consequently, $\deg(u) \geq 3$. However, by the choice of $v$, $\deg_W(u) \geq 1$ or $u$
is adjacent to another leaf $v'$ of $D$.
However, this implies that either 
\begin{itemize}
\item $\Gdeg(u) \geq 1$, if $\deg_W(u) \geq 1$ or $v'$ exists and $\deg_W(v') = 1$, i.e., $v'$ is P-nice; or
\item $\Tdeg(u) \geq 2$, if $v'$ exists and $\deg_W(v') = 2$, i.e., $v'$ is a P-tent.
\end{itemize}
Consequently, Case B or C applies to $u$.

Second, assume $\deg_W(v) = 1$, and again let $u$ be the unique neighbor of $v$ in $G[D]$.
If $\deg(u) = 2$, then Reduction Rule 2 is applicable.
By the choice of $v$, every other leaf $v'$ adjacent to $u$ also satisfies
$\deg_W(v') = 1$; that is, every child of $u$ is P-nice.
If $u \in R$, then Reduction Rule~6 applies to $u$.
If  $\Gdeg(u) \geq 3$, then Case A applies to $u$.
Hence, $u \notin R$, $\deg(u) = 3$, and $\Gdeg(u) = 2$: $u$ has a parent $x$ in $G[D]$
and either has one more child $v'$ that is P-nice or a neighbor in $W$.
In particular, $u$ is a P-tent, and $\Tdeg(x) \geq 1$.

If $\deg(x) = 2$, then Reduction Rule 7 would apply to $u$ and move $v$ to $R$, and consequently
Reduction Rule 6 would move $v$ to $W$.
If $\Gdeg(x) \geq 1$, then Case B applies to $x$.
Hence, $x$ has another child $u'$ that is not P-nice.
By the choice of $v$, the connected component of $G[D]\setminus \{x\}$ containing $u'$
is a star with $u'$ as a center.
Furthermore, since in the choice of $v$ we maximized $\deg_W(v)$, 
every child $w$ of $u'$ is P-nice (i.e., $\deg_W(w) = 1$).
Since Case A is not applicable to $u'$, we have $\Gdeg(u') \leq 2$.
If $\deg(u') = 2$, then either $u'$ is P-nice (if $\deg_W(u') = 1$) or Reduction Rule 2 is applicable to $u'$ and its child (if $\deg_W(u') = 0$).
We infer that $\deg(u') = 3$ and $\Gdeg(u') = 2$.
If $u' \in R$, then Reduction Rule~6 is applicable to $u'$.
We infer that $u'$ is a P-tent. Hence, $\Tdeg(x) \geq 2$ and Case C applies to $x$.
This completes the proof of the lemma.
 \end{proof}

Every step of the reduction rules and branching can be executed in polynomial time. In every case of branching, the branching vector is $(1,2)$. Thus we get the following recurrence:
$T(\mu)=T(\mu-1)+T(\mu-2)$.
As a result, the running time of the algorithm for DIS-IFVS is $O^{*}(\varphi^{2k})$.
This concludes the proof of Lemma~\ref{technical}
and thus of the whole Theorem~\ref{main}.

\section{Conclusion}\label{sec:Conclusion}
In this paper, we presented a faster FPT algorithm for the \textsc{Independent Feedback Vertex Set} problem by using a different measure, introducing some new reduction rules and improving the branching algorithm for the \textsc{Disjoint Independent Feedback Vertex Set} problem. Moreover, we introduce the notion of generalized degree and tent degree, which makes the reduction and branching more concise. The running time of our algorithm is $\Ohstar(3.619^{k})$, which matches the running time of the current fastest FPT algorithm for the \textsc{Feedback Vertex Set} problem. As \textsc{IFVS} is a more general problem than \textsc{FVS}, any improvement for \textsc{IFVS} will lead to an improvement for the FPT algorithm of \textsc{FVS}.
We conclude with re-iterating an open problem of~\cite{MisraPRSS12}:
does there exist a kernel of size $\Oh(k^{2})$, as it is the case for \textsc{FVS}~\cite{Iwata17,Thomasse10}?

\bibliography{IFVS-journal}

\end{document}